\documentclass[11pt]{article}
\usepackage{graphicx} 

\usepackage{lmodern}

\usepackage[a4paper, margin=1in]{geometry}

\usepackage[skins,breakable]{tcolorbox}

\usepackage[ruled,algo2e]{algorithm2e}
\usepackage{algorithm}
\usepackage{algorithmic}
\SetKwInput{KwInput}{Input}
\SetKwInput{KwOutput}{Output}

\usepackage{bm}

\usepackage[hidelinks]{hyperref}
\usepackage{tikz}
\usepackage{circuitikz}
\usepackage{subcaption}

\usepackage{float}

\usepackage{multicol}

\usepackage{amsmath}
\usepackage{amsthm}
\usepackage{amssymb}

\usepackage{bibentry}

\usepackage{thmtools}
\usepackage{thm-restate}

\allowdisplaybreaks

\newcommand{\bE}{\mathbb{E}}

\newcommand{\cB}{\mathcal{B}}
\newcommand{\cG}{\mathcal{G}}
\newcommand{\cI}{\mathcal{I}}
\newcommand{\cJ}{\mathcal{J}}

\newcommand{\polylog}{\mathrm{polylog}}

\newcommand{\MIS}{\textsf{Maximum Independent Set} }
\newcommand{\tMIS}{\textsf{Maximum Independent Set} }
\newcommand{\IS}{\textsf{Independent Set} }

\newcommand{\MCQ}{\textsf{Maximum Clique} }
\newcommand{\tMCQ}{\textsf{Maximum Clique}}

\newcommand{\VCO}{V_{\text{Common}}}
\newcommand{\GCO}{G_{\text{Certain}}}
\newcommand{\GBA}{G_{\text{Base}}}

\DeclareMathOperator\supp{supp}

\DeclareMathOperator*{\argmin}{arg\,min}

\newtheorem{theorem}{Theorem}[section]
\newtheorem*{theorem*}{Theorem}
\newtheorem{lemma}[theorem]{Lemma}
\theoremstyle{definition}
\newtheorem{definition}{Definition}[section]
\newtheorem*{definition*}{Definition}
\newtheorem{proposition}{Proposition}[section]
\newtheorem{corollary}{Corollary}[theorem]
\newtheorem{claim}{Claim}

\newtheorem{fact}{Fact}

\newtheorem*{notation*}{Notation}

\newtheorem{remark}{Remark}[section]

\title{Deterministic Streaming Lower Bounds for Approximate \MCQ and \MIS}
\author{Adithya Diddapur\thanks{\texttt{ard90@cam.ac.uk}, Department of Pure Mathematics and Mathematical Statistics, University of Cambridge, UK. This work was supported by the Engineering and Physical Sciences Research Council [Grant Ref: EP/Y028732/1]. Part of this work was also done whilst at the University of Bristol.}}
\date{}

\begin{document}

\maketitle

\begin{abstract}
    We study the canonical \MCQ and \MIS problems in the one-pass edge-arrival graph streaming setting.
    Here, the edges of some input graph $G = (V,E)$ are presented one at a time (possibly including deletions), before an algorithm needs to produce either a large clique or independent set at the end of the stream, with the focus being on space complexity.
    We are interested in finding $\beta$-approximate solutions, for any $\beta \geq 1$.

    \medskip

    Previous work gave an algorithm using $\tilde{O}\left(n^2/\beta^2\right)$ bits of space, together with a corresponding $\tilde{\Omega}\left(n^2/\beta^2\right)$ two-party communication lower bound [Halld\'orsson et al., ICALP'12], seeming to resolve the problem.
    However, their algorithm crucially relies on randomness, and the best known deterministic algorithm remains a folklore derandomisation using $O\left(n^2/\beta\right)$ bits of space, leaving a (deterministic) gap of size $\tilde{O}(\beta)$.

    \medskip

    We resolve this deterministic gap with an (almost) tight lower bound: any deterministic algorithm for either problem must use $\Omega\left(\frac{n^2}{\beta\cdot\log n}\right)$ bits of space.
    Our proof is via a two-party one-way communication lower bound, and highlights the power of randomness when approaching either of these problems.
\end{abstract}

\clearpage

\section{Introduction}

Let $G = (V,E)$ denote the (simple, undirected, and unweighted) input graph, with $\vert V \vert = n$.
The \MCQ problem asks for a largest possible subset of vertices, $U\subseteq V$, such that every pair of contained vertices is connected by an edge in $G$, whilst the \MIS problem asks for a largest subset with no induced edges.
These are two of the most fundamental problems in theoretical computer science.

In this work, we consider these problems in the \textbf{Edge-Arrival Graph Streaming} setting.
Here, the algorithm is given the entire vertex set in advance, and the stream consists of the edge set $E$ being presented one edge at a time in an arbitrary order.
A stream which only consists of (unique) edge arrivals is known as an \textbf{Insertion-Only} stream, whilst a stream which also contains `deletions' is known as an \textbf{Insertion-Deletion} stream.
The algorithm is only required to produce an output at the end of the stream, and, in the insertion-deletion case, the output must be valid for the final set of surviving edges.
We allow our algorithms to use unbounded computation time at any point, and only focus on their space complexities.

Previous work has focused on finding large independent sets in insertion-only streams.

One such line of work studied the problem of finding \text{maximal} independent sets - any independent set which is not strictly contained in any other.
Recent work has closed the remaining gap in our understanding here, leaving us with a tight space-pass tradeoff \cite{acgmw15, akns24} (up to $n^{o(1)}$ factors).
In particular, we know that any algorithm which uses $\tilde{O}(n)$ space\footnote{We use $\tilde{O}(\cdot)$ and $\tilde{\Omega}(\cdot)$ to hide $\polylog$ factors.} (i.e. \textit{semi-streaming} space) must make $\Omega(\log\log n)$ passes through the stream.
This is far more than our setting of a single pass, where a tight $\Omega\left(n^2\right)$ space lower bound was already known \cite{ack19, cdk19}.

Another approach has been to find so-called `combinatorial' solutions - independent sets of size $\frac{n}{\Delta + 1}$, where $\Delta$ denotes the maximum degree of $G$.
Here, we know of randomised algorithms which find independent sets of size $\frac{n}{\Delta+1}$ using space $\tilde{O}(n)$ \cite{hhls15, ack19}, or which estimate $\frac{n}{\Delta+1}$ directly \cite{cdk18}.
However, the best known deterministic algorithm could only obtain solutions of size $\tilde{O}\left(\frac{n}{\Delta^2}\right)$.
Remarkably, we now also know that these are both provably optimal, with \cite{y25} proving a strict separation, and being the first to highlight the power of randomisation towards independent set problems.

In this work we consider a third perspective, namely that of finding \textit{approximate} solutions.

For both $\MCQ$ and $\tMIS$, and any parameter $\beta \geq 1$, we say that a $\mathbf{\beta}$\textbf{-approximation} is a valid clique or independent set of size at least $\omega(G)/\beta$, or $\alpha(G)/\beta$ respectively - where $\omega(G)$ and $\alpha(G)$ denote the sizes of the largest cliques and independent sets in $G$ respectively.
This is extremely difficult in the offline setting, where even computing an $n^{1-\varepsilon}$-approximation is known to be \textsf{NP}-hard, for any constant $\varepsilon > 0$ \cite{h96}, in stark contrast with the previously described perspectives which both admit simple linear-time offline algorithms.

\cite{hssw12} initiated the study of these approximation problems in the streaming setting: they gave a randomised one-pass insertion-deletion streaming algorithm using space $\tilde{O}\left(n^2/\beta^2\right)$, together with a corresponding insertion-only lower bound of $\tilde{\Omega}\left(n^2/\beta^2\right)$.
I.e., they gave a tight understanding up to $\polylog(n)$ terms.
Both of these bounds apply to both problems.

However, de-randomising this algorithm has remained an open challenge, with the best known folklore derandomisation using much more space: $O\left(n^2/\beta\right)$ (still for insertion-deletion streams).
This leaves a significant gap of size $\tilde{O}\left(\beta\right)$, which is central to this work:
\[
    \text{Can this gap of size $\tilde{O}(\beta)$ be closed?}
\]

\subsection{Our Contributions}

Our contribution is to resolve this question with a strong lower bound: this folklore derandomised algorithm is (almost) optimal.
This, in turn, makes explicit the power of randomisation towards either of these problems, similarly to the previously understood case of combinatorial independent sets \cite{y25}.

\begin{restatable}{theorem}{CLIQUE}\label{thm: CQ}
    For any $\beta < \frac{n}{\log n}$, any one-pass deterministic streaming algorithm which outputs a $\beta$-approximate \MCQ must use $\Omega\left(\frac{n^2}{\beta\cdot\log n}\right)$ bits of space.
\end{restatable}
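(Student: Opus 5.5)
The plan is to reduce from a two-party one-way communication game and then make a counting (fooling-set) argument that uses determinism in an essential way. Alice receives the first part of an insertion-only stream and sends the memory state of the algorithm to Bob. Bob appends his own edges and must output a $\beta$-approximate clique of the final graph. The message length is then at most the space $s$ of the algorithm. Because the algorithm is deterministic, the message $m$ is a function of Alice's graph alone. So for each message $m$ there is a class $\mathcal{F}_m$ of Alice-inputs sharing it, and for every Bob-input $T$ Bob's output $C_{m,T}$ must be a valid clique of size at least $\omega/\beta$ simultaneously for \emph{every} $G\in\mathcal{F}_m$ combined with $T$. This simultaneity is exactly what a randomised protocol avoids, and it is where the extra factor of $\beta$ over the bound of \cite{hssw12} should come from.

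\textbf{Construction.} I will partition $V$ into $r=\Theta(\beta)$ parts $V_1,\dots,V_r$, each of size $n/r$. Alice's input is an $r$-partite graph $G_A$ with edges only between distinct parts. It comes from a family $\mathcal{G}$ in which every $G_A$ contains a large "planted" $r$-partite complete substructure, say a choice of subsets $S_i\subseteq V_i$ of size $k$ with all cross edges present. The remaining cross pairs are free bits, and there are $\Omega(n^2/\beta)$ of them per part-pair pattern in total. Bob's input selects, for each part, a subset $T_i\subseteq V_i$ and inserts a clique on each $T_i$. When Bob's choice aligns with Alice's planted structure, $\omega\ge rk$. Without using Alice's cross edges, however, any clique lies in a single part and has size at most $k$. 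Choosing $r>\beta$ forces any valid output to have size $rk/\beta>k$. Such an output must therefore use vertices from at least two parts, and in fact must contain $\Omega(k^2)$ cross pairs, all of which must be edges of every $G\in\mathcal{F}_m$.

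\textbf{Counting.} Fix a message $m$. Running over Bob's possible inputs, the outputs $C_{m,T}$ certify a set $P_m$ of cross pairs that are edges in all of $\mathcal{F}_m$. I want to argue that, as $T$ ranges, these certified pairs cover a constant fraction of the free cross pairs of any $G\in\mathcal{F}_m$. This should follow because Alice's free bits can be arranged so that every free edge lies in some planted configuration that Bob can probe. Then $\mathcal{F}_m$ essentially determines $\Omega(n^2/\beta)$ bits of $G_A$, so $|\mathcal{F}_m|\le |\mathcal{G}|\cdot 2^{-\Omega(n^2/\beta)}$ up to lower-order terms. It follows that $s\ge \log(\#\text{messages})=\Omega(n^2/\beta)$, minus the cost of describing the planted positions and Bob's choices. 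That cost is where I expect the $\log n$ loss. Alternatively, I would take $k=\Theta(\log n)$ and pay a $\log n$ factor because each certified clique only pins down edges among $O(k)$ vertices. The condition $\beta<n/\log n$ ensures $k\cdot r\le n$.

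\textbf{Main obstacle.} The hard step is designing $\mathcal{G}$ so that the outputs, which are adversarially chosen by the deterministic Bob subject only to validity, really are forced to certify \emph{many distinct} Alice edges rather than reusing the same few cross pairs. A naive family lets Bob always output the same small bridged clique. My first attempt would be to make the planted substructures "disjoint in cross pairs": use a design or random family of $k$-subsets so that distinct Bob inputs overlap in $O(1)$ vertices per part. Each valid output then certifies fresh pairs, and a union bound over the at most $n^{O(k)}$ possible outputs supplies the $\log n$ term in the final bound $\Omega\!\left(\frac{n^2}{\beta\log n}\right)$.
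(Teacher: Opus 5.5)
Your proposal has a genuine gap at exactly the step you flag, and the construction as described makes that step false. Every $G_A\in\mathcal{G}$ contains a complete $r$-partite graph on $S_1,\dots,S_r$. It therefore contains a clique $\{s_1,\dots,s_r\}$ with one vertex in each part. A clique that takes at most one vertex from a part needs none of Bob's edges there, so this transversal is a valid clique in $G_A\cup T$ for \emph{every} $T$.

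In your family $\omega(G)\le rk$ always holds, because within a part a clique has at most one vertex or lies inside $T_i$. Now consider the protocol in which Alice sends a maximum clique of her own graph (at most $r$ vertices, so $O(\beta\log n)$ bits), and Bob outputs the larger of it and $T_1$. This is a $\beta$-approximation as soon as $k\le\beta$, since then $rk/\beta\le r$. In particular it defeats your choice $k=\Theta(\log n)$.

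The certified cross pairs in this protocol are the same for all $T$. That is why a design making Bob's inputs overlap little does not help: valid outputs are not confined to $\bigcup_i T_i$. Avoiding this forces $k>\beta$ and $r>\beta$, hence $n\ge rk>\beta^2$, so this family cannot reach $\beta\ge\sqrt n$ at all. Even for smaller $\beta$, you give no reason why the edges common to a large class $\mathcal{F}_m$ cannot support a clique of size $rk/\beta$ for the relevant $T$. That is the entire content of your counting step, and it is only asserted.

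The missing idea is a protocol-independent bound on the clique number of everything Bob can be certain of. The paper obtains it from a random base graph $\GBA\sim\cG_{n,1/2}$:
\begin{itemize}
\item Alice holds $\GBA[A]$. After seeing $m$, Bob holds $\overline{\GBA}$ on $A'\setminus\VCO$, for a random $A'$ in the preimage, where $\VCO$ is the intersection of all sets in the preimage.
\item Every edge Bob can certify then lies in $\overline{\GBA}$ or in $\GBA[\VCO]$. So certified cliques have size $O(\log n)$ with high probability; this, not the cost of describing planted positions, is the source of the $\log n$ loss.
\item Meanwhile $A'=A$ has positive probability and gives a clique of size $|A|-|\VCO|$.
\item Low information is converted into a small common core via $\cJ(A\mid M=m)\ge|\VCO|-\log(n+1)$.
\item The $n^2/\beta$ scale comes from packing $\Theta(n^2/\beta^2)$ such gadgets on $18\beta$-vertex sets with pairwise intersections at most one. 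The inequality $\cJ(A\mid M=m)\ge\sum_i\cJ(A_i\mid M=m)$ then locates a gadget about which the message conveys only $o(\beta)$ bits.
\end{itemize}
Your proposal has no analogue of either the certified-structure bound or the common-core entropy step, so as it stands it does not prove the theorem.
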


\begin{restatable}{theorem}{MISr}\label{thm: IS}
    For any $\beta < \frac{n}{\log n}$, any one-pass deterministic streaming algorithm which outputs a $\beta$-approximate \MIS must use $\Omega\left(\frac{n^2}{\beta\cdot\log n}\right)$ bits of space.
\end{restatable}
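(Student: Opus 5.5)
The plan is to deduce Theorem~\ref{thm: IS} directly from Theorem~\ref{thm: CQ} by complementation, so that only the clique bound needs a real lower-bound argument. The key observation is that a set $U \subseteq V$ is an independent set in $G$ if and only if it is a clique in the complement $\bar G$. It follows that $\alpha(G) = \omega(\bar G)$, and that any $\beta$-approximate \MIS of $G$ is exactly a $\beta$-approximate \MCQ of $\bar G$.

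The reduction will be stated at the level of the two-party one-way communication game that the proof of Theorem~\ref{thm: CQ} uses. I would make the following choices.
\begin{enumerate}
\item Take the hard instance for cliques, in which Alice holds an edge set $E_A$ and Bob holds $E_B$ on the common vertex set $V$.
\item Alice simulates the \MIS algorithm on a stream that represents the complement of her part. Bob then continues the stream with updates that turn the final graph into $\bar G$, where $G = (V, E_A \cup E_B)$.
\item The \MIS algorithm outputs an independent set of $\bar G$. This set is a clique of $G$ of size at least $\omega(G)/\beta$, so it solves the clique problem. The message length equals the algorithm's space, so the $\Omega\left(\frac{n^2}{\beta\log n}\right)$ bound transfers.
\end{enumerate}

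The subtlety is the stream model. With insertion-deletion streams this is immediate: Alice streams all pairs not in $E_A$, and Bob deletes the pairs of $E_B$ that Alice inserted. Bob knows which pairs these are, since they are exactly the pairs of $E_B$ not lying in $E_A$, and he can learn $E_A$'s relevant membership from the structure of the hard instance. Alternatively, the players can agree in advance to leave Bob's edge slots to him.

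For insertion-only streams, the cleaner route is to observe that the hard instance behind Theorem~\ref{thm: CQ} has a fixed partition of vertex pairs. Alice's input decides only the pairs in a set $P_A$, and Bob's input decides only the pairs in a disjoint set $P_B$. The remaining pairs are fixed by the construction. The players then proceed as follows.
\begin{itemize}
\item Alice inserts the non-edges of $G$ among $P_A$, together with the complement of the fixed part.
\item Bob inserts the non-edges of $G$ among $P_B$.
\end{itemize}
The resulting stream is insertion-only and yields exactly $\bar G$.

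The main obstacle I expect is checking that the hard distribution for cliques really has this disjoint-ownership form. If it does not, I would instead \emph{re-run the clique argument verbatim on complemented graphs}. The combinatorial core would be unchanged, since independent sets in the complemented instance play the role of cliques. The result would be a fooling-set or counting argument yielding $2^{\Omega(n^2/(\beta\log n))}$ distinguishable Alice inputs. The condition $\beta<n/\log n$ is inherited unchanged from Theorem~\ref{thm: CQ}.
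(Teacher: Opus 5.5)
Your insertion-only route is correct, but it reuses the clique machinery differently from the paper. You invoke the clique communication bound as a black box through complementation. The paper instead writes down a new MIS input distribution and re-proves the graph-side facts (Lemmas~\ref{lem: is alpha when A'=A} and~\ref{lem: is out size}) directly, recycling only the compression lemmas. The check you flagged does go through, provided ownership is fixed by $\GBA$ alone. Take $P_A=E(\GBA)$ and $P_B=E(\overline{\GBA})$, not ``the pairs Bob's input decides'': those lie in $B_{\min}$, and $\min$ depends on $m$, which Alice only learns after streaming her part. Alice then streams $E(\GBA)\setminus E_A$ and Bob streams $E(\overline{\GBA})\setminus E_B$, where $E_A,E_B$ are the clique-game inputs. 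This is exactly the paper's MIS input distribution, so the hard instances coincide. The simulation also needs $\GBA$ to be common knowledge so the players can form these streams. This is harmless, because the clique argument treats $\GBA$ as fixed and exploits only Bob's uncertainty about $A$: fix a $\GBA$ with $\omega(\GBA),\omega(\overline{\GBA})\le 3\log n$ and average over $A$ alone. You should still say so explicitly.

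What your route buys is that it inherits the clique bound $\vert OUT\vert\le 6\log n$ wholesale. That includes the treatment of $A'_{\min}\setminus\VCO\vert_{\min}$ in Claims~\ref{claim: multi cq gco induced A} and~\ref{claim: multi cq gco induced V - A}, so you need no new certainty argument for non-edges. The paper's direct proof of Lemma~\ref{lem: is out size} asserts that every non-edge Bob is certain of lies in some $\VCO\vert_i\times\VCO\vert_i$. That overlooks the $\overline{\GBA}$-pairs inside $A'_{\min}\setminus\VCO\vert_{\min}$: they are absent from $E_B$ and never held by Alice, so Bob knows they are non-edges. Repairing this gives $6\log n$ rather than $3\log n$, which is harmless. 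The paper's version, in turn, is self-contained: it exhibits an explicit hard MIS distribution without restating the clique bound for protocols that know $\GBA$. Your insertion-deletion route alone would only cover algorithms that must handle deletions, which is weaker than the statement. (There, Bob simply deletes all of $E_B$, since $E_A\cap E_B=\emptyset$.) The counting fallback is unnecessary.
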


Our proofs are via one-way two-party communication complexity, as is a standard approach in the streaming setting.

\subsection{Related Work}

\paragraph{Deterministic Streaming Lower Bounds.}
We would be remiss to not also discuss the work of \cite{acs22}, which presented a deterministic streaming lower bound for the related graph colouring problem.
Here, the algorithm is asked to colour the vertices of a graph such that no two vertices with the same colour are adjacent to each other, and the goal is to use as few colours as possible.
They proved that any deterministic one-pass semi-streaming space algorithm for colouring must use $\exp\left(\Delta^{\Omega(1)}\right)$ many colours, in stark contrast to the $(\Delta+1)$ colouring which can be achieved with randomness \cite{ack19}.

They did this by combining two key high-level ideas: firstly that the inputs to players can (and should) be chosen adaptively based on the messages sent, followed by a novel approach towards quantifying `how much' a graph can be compressed into a message, yielding a new compression lemma.
In particular, they observe that this adaptivity when choosing inputs is in fact a requirement for proving deterministic specific lower bounds, since otherwise we would end up with an input distribution to which Yao's minimax principle would apply \cite{y77}, and hence only another randomised lower bound.
The work of \cite{y25} then built on this by showing that the compression lemma of \cite{acs22} can also be directly applied to the problem of finding combinatorial independent sets - their works differed in terms of the actual input graphs constructed, and the combinatorics of these graphs.

Our work can also be seen as taking inspiration from \cite{acs22}, but much more loosely: we will also be adaptive when choosing inputs for Alice and then Bob, but with both a different compression argument, and also combinatorial construction.

\paragraph{Other \IS Problems.}

Due to its status as a canonical problem, several other perspectives of \MIS have also garnered research attention in the streaming setting.
This has included work on sparse graphs \cite{blsvy18}, hypergraphs \cite{hhls15}, size estimation \cite{cdk17, cdk18, ccew23}, vertex arrival streams \cite{cdk19, s25}, and geometric streams \cite{ehr16, bcw20, ddk23, adhkn26}, which we mention as relevant context for our work.

\subsection{Our Techniques}\label{sec: techniques}

We now describe the techniques behind our lower bounds from the perspective of \tMCQ.
The extension to \MIS uses the same high level ideas, and only differs in terms of the actual constructed graphs.

On a very high level, our lower bounds use the same setup as the lower bounds by \cite{acs22} and \cite{y25}, who gave deterministic lower bounds for vertex colouring, and finding combinatorial independent sets respectively.
Namely, we will also present a one-way communication lower bound, with player inputs chosen adaptively based on the previous inputs and messages sent.
However, unlike the aforementioned works, our communication game will only involve two parties, Alice and Bob.

There is also another crucial way in which our work will differ from theirs.
The work of \cite{y25} produces a deterministic independent set lower bound by constructing graphs which \emph{only} have small independent sets.
However, this idea will not work for us: in order to obtain a bound on the approximation problem, we will need to construct input graphs where the algorithm can still only produce small outputs, but where the graph also simultaneously contains some large clique.

\paragraph{The Lower Bound of \cite{hssw12}.}
To this end, we begin our discussion with the previous randomised $\tilde{\Omega}\left(n^2/\beta^2\right)$ lower bound of \cite{hssw12}.
The heart of their lower bound is a combinatorial construction of theirs which allows them to pack $\Theta\left(n^2/\beta^2\right)$ many \textbf{edge-disjoint} cliques of size $\beta$ into the graph $G$.
This is equivalent to saying that any two of these cliques of size $\beta$ share at most one vertex in common.
It will be convenient for us to refer to these vertex subsets of size $\beta$ as `$\beta$-subsets'.

They then used this to give a reduction from the canonical \textsf{Set-Disjointness} problem \cite{ks92}: each $\beta$-subset contained a gadget encoding a single bit from each player, with the gadget containing either a clique of size $\beta$ or of size $O(\log n)$, depending on the \textsf{Set-Disjointness} bits being encoded.

Another way to think of this is a graph construction with $\Theta(n^2/\beta^2)$ gadgets, each of which requires $\Omega(1)$ bits of communication to solve.
Building on this approach is how we will obtain our deterministic lower bound.
We still use the same packing to combine multiple gadgets, but now replace each gadget with a new one which will instead require $\Omega(\beta)$ bits of communication to solve deterministically.

\paragraph{Our Approach.}
We are now ready to outline our approach, which proceeds via two high level steps.

First, we construct a family of hard input graphs  on $n$ vertices, and partitions of edges between Alice and Bob such that, when the message conveys only a small amount of information, the graph will contain a clique of size at least $(1/2-o(1))\cdot n$, whilst Bob is only able to produce outputs of size at most $3\cdot\log n$.
We do this by using an information theoretic approach to understand Bob's perspective of Alice's input graph given the message $m$, i.e. the compression step of any protocol, and then using this to adaptively construct a hard input for Bob.
This allows us to prove that any better-than $\left(\frac{n}{12\cdot\log n}\right)$-approximation requires at least one message of size $\Omega(n)$, and is precisely the construction that we think of as a `single gadget'.

More specifically, we first sample an Erd\H{o}s-R\'{e}nyi random graph, $\GBA$, and give Alice the graph $E_A = (V,E(\GBA[A]))$ for a randomly sampled subset $A\subseteq V$ of size $\vert A \vert = n/2$.
Then, to construct Bob's input, upon receiving the message $m$ we first identify the common core of vertices which Bob is certain are contained in $A$, denoted $\VCO$.
Next, and key to our construction, Bob's input is constructed by randomly picking any $A'\in m^{-1}$ (with positive probability that $A' = A$), and then giving Bob edges induced on this subset, but \textit{avoiding} the common core.
This prevents Bob from outputting any large cliques since they will never be sure if the required corresponding edges are held by Alice, whilst still maintaining a positive probability that $G$ contains a large clique, ruling out better-than $n/(12\cdot\log n)$-approximations using space $o(n)$.
To aid in understanding, we illustrate the construction of a single gadget in Figure \ref{fig: single cq}.

The technical crux of analysing this construction is to relate the communication cost of any protocol with both $\vert\VCO\vert$ and the resulting combined input graph distribution.
We do this via an information theoretic approach to counting, and this part of our argument can be seen as a key technical contribution of this work.

We then extend this construction to obtain our final result for $\beta$-approximations by combining $\Theta\left(n^2/\beta^2\right)$ many independent gadgets.
This is precisely where we return to the edge-disjoint packing previously used by \cite{hssw12}.
In particular, we scale our gadgets down to lie within $18\cdot\beta$ many vertices, so we can still pack $\Theta(n^2/\beta^2)$ many into the graph $G$.
This involves giving Alice $\Theta\left(n^2/\beta^2\right)$ many disjoint independent single gadget inputs, before adaptively identifying the instance for which the message $m$ conveys the least information, using an information-theoretic notion of what this means. 
Then, Bob's input will be precisely the input they would be given when only considering this single gadget.

The result is an input graph distribution such that as long as there exists \textit{any} single gadget for which the message does not convey much information, the input graph may contain (with positive probability over the choice of $A'$) cliques of size $8\cdot\beta$, whilst Bob will only be able to produce outputs of size at most $6\cdot\log n$.
Thus, to obtain a better than $\left(\frac{\beta}{\log n}\right)$-approximation, the message must always convey enough information to simultaneously resolve each independent gadget which is sufficient to complete the argument since each gadget requires $\Omega(\beta)$ bits of communication to solve.

\begin{remark}
    The previous discussion of \cite{acs22} regarding the need for adaptivity when proving deterministic lower bounds precludes a proof via a standard reduction from most communication problems.
    However, to aid in intuition, we point out that our ideas have some parallels with the \textsf{Equality} communication problem, where the players need to determine whether or not they hold identical bit strings.
    This is a communication problem which is known to require $\Omega(n)$ bits of communication deterministically, but only $O(1)$ bits to succeed with constant probability using randomisation \cite{ry20}.
\end{remark}

\begin{figure}
    \begin{center}
        \subcaptionbox{Alice's input graph $E_A$. The highlighted subset of vertices denote the sampled $A\subseteq V$, and the edges with both endpoints in this subset are individually realised as per the randomly sampled base graph $\GBA$.}
        {
            \resizebox{.45\textwidth}{!}{%
            \begin{circuitikz}
            \tikzstyle{every node}=[font=\fontsize{26.2pt}{34.0pt}\selectfont]
            \node [font=\fontsize{18.2pt}{23.7pt}\selectfont, inner xsep=0.080cm, inner ysep=0.085cm, rounded corners=0.000cm] at (9.75,6) {};
            \draw  (3.75,4.25) ellipse (8.125cm and 5.875cm);
            \begin{scope}
            \clip [rotate around={226:(3.25,3.625)}] (3.25,3.625) ellipse (4.5cm and 3.375cm);
            \foreach \x in {-4.50,-4.34,...,4.50} {
              \draw[rotate around={45:(3.25,3.625)}] ([xshift=\x cm]3.25,-0.875) -- ([xshift=\x cm]3.25,8.125);
            }
            \end{scope}
            \draw [ rotate around={226:(3.25,3.625)}] (3.25,3.625) ellipse (4.5cm and 3.375cm);
            \end{circuitikz}
            }%
        }
        \hspace{0.4cm}
        \subcaptionbox{Bob's view of $E_A$ given the message $m$. Here, we suppose that there are three graphs in the pre-image $m^{-1}$, each shown via a separate ellipse. The highlighted region is only to show which was Alice's input: Bob is unable to distinguish which one actually corresponds to Alice's input graph. The intersection $\VCO$ is shown as the shaded solid grey region in the middle.}
        {
            \resizebox{.45\textwidth}{!}{%
            \begin{circuitikz}
            \tikzstyle{every node}=[font=\fontsize{18.2pt}{23.7pt}\selectfont]
            \draw  (3.75,4.25) ellipse (8.125cm and 5.875cm);
            \begin{scope}
            \clip [rotate around={226:(3.25,3.625)}] (3.25,3.625) ellipse (4.5cm and 3.375cm);
            \foreach \x in {-4.50,-4.34,...,4.50} {
              \draw[rotate around={45:(3.25,3.625)}] ([xshift=\x cm]3.25,-0.875) -- ([xshift=\x cm]3.25,8.125);
            }
            \end{scope}
            \draw [ rotate around={226:(3.25,3.625)}] (3.25,3.625) ellipse (4.5cm and 3.375cm);
            \begin{scope}
            \clip [rotate around={-44:(6.25,5.125)}] (6.25,5.125) ellipse (4.5cm and 3.375cm);
            \end{scope}
            \draw [ rotate around={-44:(6.25,5.125)}] (6.25,5.125) ellipse (4.5cm and 3.375cm);
            \begin{scope}
            \clip [rotate around={-15:(2.25,4.875)}] (2.25,4.875) ellipse (4.5cm and 3.375cm);
            \end{scope}
            \draw [ rotate around={-15:(2.25,4.875)}] (2.25,4.875) ellipse (4.5cm and 3.375cm);
            \begin{scope}
                \clip [rotate around={226:(3.25,3.625)}] (3.25,3.625) ellipse (4.5cm and 3.375cm);
                \clip [rotate around={-44:(6.25,5.125)}] (6.25,5.125) ellipse (4.5cm and 3.375cm);
                \clip [rotate around={-15:(2.25,4.875)}] (2.25,4.875) ellipse (4.5cm and 3.375cm);
                \fill[black, opacity=0.4] (3.75,4.25) ellipse (8.125cm and 5.875cm);
            \end{scope}
            \end{circuitikz}    
            }%
        }
        \\[1cm]
        \subcaptionbox{The resulting (combined) input graph $G$ when $A' = A$. Here, $E_B$ contains the remaining edges to fill in $G[A\setminus\VCO]$, and this is shown via the solid black infill. This clearly results in a clique of size (at least) $\vert A \vert - \vert \VCO \vert$.}
        {
            \resizebox{.45\textwidth}{!}{%
            \begin{circuitikz}
            \tikzstyle{every node}=[font=\fontsize{18.2pt}{23.7pt}\selectfont]
            \node [font=\fontsize{18.2pt}{23.7pt}\selectfont, inner xsep=0.080cm, inner ysep=0.085cm, rounded corners=0.000cm] at (9.75,6) {};
            \draw  (3.75,4.25) ellipse (8.125cm and 5.875cm);

            \draw [ fill={rgb,255:red,0; green,0; blue,0}, fill opacity=1, rotate around={226:(3.25,3.625)}] (3.25,3.625) ellipse (4.5cm and 3.375cm);
            \begin{scope}
                \clip [rotate around={226:(3.25,3.625)}] (3.25,3.625) ellipse (4.5cm and 3.375cm);
                \clip [rotate around={-44:(6.25,5.125)}] (6.25,5.125) ellipse (4.5cm and 3.375cm);
                \clip [rotate around={-15:(2.25,4.875)}] (2.25,4.875) ellipse (4.5cm and 3.375cm);
                \fill[white, opacity=0.6] (3.75,4.25) ellipse (8.125cm and 5.875cm);
            \end{scope}
            \begin{scope}
            \clip [rotate around={226:(3.25,3.625)}] (3.25,3.625) ellipse (4.5cm and 3.375cm);
            \foreach \x in {-4.50,-4.34,...,4.50} {
              \draw[rotate around={45:(3.25,3.625)}] ([xshift=\x cm]3.25,-0.875) -- ([xshift=\x cm]3.25,8.125);
            }
            \end{scope}
            \begin{scope}
            \clip [rotate around={-44:(6.25,5.125)}] (6.25,5.125) ellipse (4.5cm and 3.375cm);
            \end{scope}
            \draw [ rotate around={-44:(6.25,5.125)}] (6.25,5.125) ellipse (4.5cm and 3.375cm);
            \begin{scope}
            \clip [rotate around={-15:(2.25,4.875)}] (2.25,4.875) ellipse (4.5cm and 3.375cm);
            \end{scope}
            \draw [ rotate around={-15:(2.25,4.875)}] (2.25,4.875) ellipse (4.5cm and 3.375cm);

            \end{circuitikz}
            }      
        }
        \hspace{0.4cm}
        \subcaptionbox{The resulting (combined) input graph $G$ when $A'\neq A$. Here, no edge in $E_B$ is incident to any vertex in $\VCO$, and so the result is a graph where Bob is not certain about the locations of any large cliques if they exist (with high probability over $\GBA$).}
        {
            \resizebox{.45\textwidth}{!}{%
            \begin{circuitikz}
            \tikzstyle{every node}=[font=\fontsize{18.2pt}{23.7pt}\selectfont]
            \node [font=\fontsize{18.2pt}{23.7pt}\selectfont, inner xsep=0.080cm, inner ysep=0.085cm, rounded corners=0.000cm] at (9.75,6) {};
            \draw  (3.75,4.25) ellipse (8.125cm and 5.875cm);
            \begin{scope}
            \clip [rotate around={226:(3.25,3.625)}] (3.25,3.625) ellipse (4.5cm and 3.375cm);
            \foreach \x in {-4.50,-4.34,...,4.50} {
              \draw[rotate around={45:(3.25,3.625)}] ([xshift=\x cm]3.25,-0.875) -- ([xshift=\x cm]3.25,8.125);
            }
            \end{scope}
            \draw [rotate around={226:(3.25,3.625)}] (3.25,3.625) ellipse (4.5cm and 3.375cm);
            \begin{scope}
            \clip [rotate around={-44:(6.25,5.125)}] (6.25,5.125) ellipse (4.5cm and 3.375cm);
            \foreach \x in {-4.50,-4.34,...,4.50} {
              \draw[rotate around={45:(6.25,5.125)}] ([xshift=\x cm]6.25,0.625) -- ([xshift=\x cm]6.25,9.625);
            }
            \end{scope}
            \draw [ rotate around={-44:(6.25,5.125)}] (6.25,5.125) ellipse (4.5cm and 3.375cm);
            \draw [ rotate around={-15:(2.25,4.875)}] (2.25,4.875) ellipse (4.5cm and 3.375cm);
            \end{circuitikz}
            }%
        }
    \end{center}
    \captionsetup{width=.8\linewidth}
    \caption{A step-by-step illustration of how a single gadget is produced for use in our approximate \MCQ lower bound. We use hatching to depict edges which are `partially filled in' as per the randomly sampled base graph $\GBA$.}
    \label{fig: single cq}
\end{figure}

\subsection{Outline of this Work}
In Section \ref{sec: prelims} we present all technical preliminaries required for our later proofs.
In Section \ref{sec: clique} we present the complete proof for approximate \MCQ (Theorem \ref{thm: CQ}) by first presenting a single gadget (Section \ref{sec: cq single gadget}), before showing how multiple are combined (Section \ref{sec: cq multi gadget}).
In Section \ref{sec: MIS}, we show how this is adapted to approximate \MIS (Theorem \ref{thm: IS}).
Finally, for completeness, the (folklore) derandomised algorithm is presented in Appendix \ref{app: algs}.

\section{Preliminaries}\label{sec: prelims}

We now present the required technical preliminaries for this work.

All logarithms are stated in base two.
The following standard fact regarding random graphs will be useful; we use $\cG_{n,1/2}$ to refer to the distribution of Erd\H{o}s-R\'{e}nyi random graphs (i.e. each edge is independently realised with probability $1/2$).

\begin{proposition}\cite{fk15}\label{prop: random graph clique}
    Let $G\sim\cG_{n,1/2}$.
    Then, $\Pr(\omega(G) \leq 3\cdot\log n) \geq 1 - 1/n$.
\end{proposition}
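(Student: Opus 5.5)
This is the standard first-moment bound on the clique number of $\cG_{n,1/2}$. I would show that, with probability at least $1-1/n$, no set of $k := \lceil 3\log n\rceil$ vertices forms a clique. Since a clique of size larger than $k$ contains a clique of size exactly $k$, this gives $\omega(G) < k$, which is at most $3\log n$ up to the ceiling. A more careful choice of $k$ below the bound removes the rounding issue.

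Let $X_k$ count the $k$-cliques in $G$. By linearity of expectation, each $k$-subset is a clique with probability $2^{-\binom{k}{2}}$, so
\[
\bE[X_k] = \binom{n}{k}2^{-\binom{k}{2}} \le \left(n\cdot 2^{-(k-1)/2}\right)^{k}.
\]
With $k \ge 3\log n$ we get $2^{-(k-1)/2} \le \sqrt{2}\, n^{-3/2}$. The base is then at most $\sqrt{2}\, n^{-1/2}$, which is at most $1/2$ once $n \ge 8$. Since $k\ge 3\log n$, the $k$-th power is at most $(\sqrt 2/\sqrt n)^{k}$, and this is at most $n^{-1}$ for all $n \geq 8$. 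Markov's inequality then gives $\Pr(X_k \ge 1) \le \bE[X_k] \le 1/n$.

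The only real obstacle is bookkeeping: making the integer choice of $k$ compatible with the bound ``$\le 3\log n$'', and handling small $n$. For the first point, take $k=\lfloor 3\log n\rfloor+1$. Then $\omega(G)\le k-1 \le 3\log n$ whenever $X_k=0$, and the estimate above still applies since $k > 3\log n$. For small $n$ (say $n<8$), the claim is vacuous because $3\log n \ge n$ there, so $\omega(G)\le n\le 3\log n$ always holds for $n\ge 2$. The case $n=1$ is trivial as well.
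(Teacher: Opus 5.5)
Your proof is correct and complete. It is the standard first-moment argument (Markov's inequality applied to the number of $k$-cliques with $k>3\log n$), which is exactly the argument behind the textbook fact that the paper cites from \cite{fk15} without reproducing a proof. The rounding worry was unnecessary, since $k=\lceil 3\log n\rceil$ already gives $\omega(G)\le k-1<3\log n$. Your estimate also yields the stronger failure probability $2^{-k}\le n^{-3}$.
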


Notationally, we will use $\overline{G}$ to denote the edge-complement of $G$, and $G[U]$ to denote the subgraph of $G$ induced by $U\subseteq V$.

\subsection{Communication Complexity}\label{sec: cc prelims}

We use the standard one-way two-party communication complexity setting, first introduced by \cite{y79}.
We briefly outline this now, but also refer the reader to the excellent textbook by Rao and Yehudayoff \cite{ry20} for further discussion on the topic.

Here, we have two parties that we denote Alice and Bob, and an input graph $G = (V,E)$.
Both parties are given the entire vertex set $V$, and then the edge set is partitioned between the two.
We use $E_A$ to denote the graph given to Alice, and $E_B$ to denote that given to Bob.
Both parties only have visibility over the edges they are given, and not those held by the other party.
Then, to play the game, Alice is allowed to send some message $M = M(E_A)$ to Bob, before Bob is required to produce an output for the (shared) input graph $G$.
The goal is for the message to be as small as possible, and to this end both parties are given unlimited local computational resources.

The manner in which Alice computes their message is referred to as the \textbf{protocol}, and the \textbf{communication cost} of a protocol is the worst case (largest) message size over all inputs.
The \textbf{communication complexity} of a problem is defined as the minimum communication cost over all protocols.
It is a standard fact that communication lower bounds yield streaming lower bounds.

\begin{fact}[cf. \cite{ams96}]\label{fact: cc vs streaming}
    The communication complexity of a problem in the one-way two-party communication setting is a lower bound for the space required by any one-pass streaming algorithm for the same problem.
\end{fact}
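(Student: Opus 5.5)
The plan is the standard simulation argument. Suppose there is a one-pass streaming algorithm $\mathcal{S}$ for some problem that uses at most $s$ bits of memory on every stream over the vertex set $V$. We turn it into a one-way protocol with communication cost at most $s$, as follows.

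Alice fixes an arbitrary ordering of the edges in $E_A$ and presents them to $\mathcal{S}$ as the first part of the stream. She then sends Bob the full memory state of $\mathcal{S}$ at that point, which takes at most $s$ bits. Since both parties know $V$ in advance, the vertex set need not be sent.

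Bob initialises $\mathcal{S}$ with the received state and continues the stream by feeding in the edges of $E_B$ in any order. When the stream ends, he outputs whatever $\mathcal{S}$ outputs.

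The key step is correctness. A streaming algorithm's future behaviour depends only on its current memory state and the rest of the stream. So Bob's simulation produces exactly the run of $\mathcal{S}$ on the concatenated stream $E_A \circ E_B$. This is a valid edge-arrival stream for $G=(V,E_A\cup E_B)$, because $E_A$ and $E_B$ partition $E$, so every edge appears exactly once. The stream is insertion-only, so it is admissible even for algorithms that only handle insertion-only streams.

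It follows that whatever guarantee $\mathcal{S}$ provides on every stream, such as outputting a $\beta$-approximate clique or independent set, it also provides on $G$. Hence the protocol solves the communication problem with cost at most $s$. Taking the minimum over protocols, the communication complexity is at most $s$, which is the contrapositive of the fact.

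Some care is needed if $\mathcal{S}$ is randomised. In that case the random bits would have to be shared, which gives a public-coin protocol. Our setting is deterministic, however, so this issue does not arise: Bob's output is a deterministic function of the message and $E_B$. The same holds if $\mathcal{S}$ uses unbounded computation, since the communication model also grants unbounded local computation.

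The only real subtlety is that the memory state is the entire information carried across the cut. This includes any finite-control or program-counter information, all of which is accounted for in the $s$ bits of space.
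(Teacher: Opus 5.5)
Your simulation argument is correct and is exactly the standard reduction the paper relies on by citing \cite{ams96}; the paper gives no proof of its own. Alice runs the algorithm on $E_A$ and sends its memory state of at most $s$ bits, then Bob finishes the stream on $E_B$, which turns any deterministic $s$-space one-pass algorithm into a one-way protocol of cost at most $s$.
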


\subsection{Counting with Information Theory}

We now present the basic information theory required by our compression arguments.
We refer the reader to the excellent textbook by Cover and Thomas \cite{cv05} for a more complete introduction to the topic.

We use $H(X)$ to denote the \textbf{Shannon entropy} of $X$ (in bits), and $H(X \mid Y = y)$ to denote the Shannon entropy of $X$ conditioned on the event $Y = y$.
We use $H(X \mid Y)$ to denote the Shannon entropy of $X$ conditioned on $Y$, and $\cI(X ; Y) = H(X) - H(X \mid Y)$ to denote the \textbf{mutual information} of $X$ and $Y$.
In particular, given a message $M$, and Alice's input $E_A$, we also refer to $\cI(E_A; M)$ as the information cost of the corresponding protocol, noting that this definition is specific to the one-way setting.

To motivate our use of these quantities, it is a standard fact that the information cost of a protocol is a lower bound for its communication cost.

\begin{fact}[cf. \cite{bbcr10}]\label{fact: IC vs CC}
    Let $\pi$ denote any one-way two-party communication protocol on any input distribution $\mu$.
    Let $M$ denote the message sent by Alice to Bob, and let $E_A$ denote Alice's input.
    Then,
    \[
        \cI_{\mu}(E_A ; M) \leq CC(\pi),
    \]
    where $CC(\pi)$ denotes the communication cost of $\pi$.
\end{fact}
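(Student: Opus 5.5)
The plan is to sandwich the information cost between the entropy of the message and the logarithm of the number of possible messages. Since Alice's message is a deterministic function $M = M(E_A)$ of her input (in the one-way setting we consider, with Alice's behaviour fixed by the protocol $\pi$), we first write
\[
    \cI_{\mu}(E_A ; M) = H(M) - H(M \mid E_A),
\]
using the symmetry of mutual information. Because $M$ is determined by $E_A$, we have $H(M \mid E_A) = 0$, so $\cI_{\mu}(E_A;M) = H(M)$. If one allows Alice private randomness, the same identity still gives $\cI_{\mu}(E_A;M) \leq H(M)$ by non-negativity of conditional entropy, which is all we need.

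Next we bound $H(M)$ by the support size of $M$. Let $c = CC(\pi)$ be the worst-case message length over all inputs. If messages are sent as strings of exactly $c$ bits (padding shorter messages, which does not change the cost), then $M$ takes at most $2^{c}$ values. The standard fact that entropy is maximised by the uniform distribution on the support then gives $H(M) \leq \log \vert \supp(M)\vert \leq c$. Chaining the two steps yields $\cI_{\mu}(E_A;M) \leq CC(\pi)$, for every input distribution $\mu$, since nothing above depended on $\mu$.

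The only step needing care is the encoding convention for messages of varying length. If messages of length at most $c$ are allowed without padding or a prefix-free requirement, the support could have up to $2^{c+1}-1$ elements, giving only $H(M) \leq c+1$. I would handle this by adopting the convention (standard in \cite{ry20}) that messages are prefix-free, or simply padded to length $c$, so that Bob can parse them. Alternatively, one can accept the additive $+1$, which is irrelevant for the asymptotic bounds of Theorems \ref{thm: CQ} and \ref{thm: IS}. Beyond this bookkeeping point, the argument is immediate.
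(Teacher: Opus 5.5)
Your proposal is correct and is the standard argument behind this fact, which the paper does not prove but simply cites from \cite{bbcr10}: $\cI_{\mu}(E_A;M) \leq H(M) \leq \log\vert\supp(M)\vert \leq CC(\pi)$. Your handling of variable-length messages is also sound: a prefix-free set of strings of length at most $c$ has at most $2^{c}$ elements, and in the streaming application the message is a memory state of $s$ bits anyway, so the additive $+1$ is never actually needed.
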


The following three facts are also standard, and will be useful to us.
All three of them are presented in the previously mentioned textbook by Cover and Thomas \cite{cv05}.

\begin{fact}\label{fact: entropy facts}
    Let $X$ and $Y$ be random variables on the same support.
    Then, the following are all true:
    \begin{itemize}
        \item \textit{Entropy of any Distribution:} $H(X) \leq \log(\vert \supp(X) \vert)$, with equality if and only if $X$ is uniformly distributed.
        \item \textit{Conditional Entropy:} $H(X \mid Y) = \bE_{y\sim Y}[H(X \mid Y = y)]$.
        \item \textit{Chain Rule for Entropy:} Let $X_1,\dots,X_m$ be random variables on the same support. Then 
            \[
                H(X_1,\dots,X_m) = \sum_{i=1}^m H(X_i \mid X_{<i}),
            \]
            where $X_{<i} = (X_1,\dots,X_{i-1})$, and the same holds for conditional entropy.

            In particular, if $X_1,\dots,X_m$ are independent (follow a product distribution), then
            \[
                H(X_1,\dots,X_m) = \sum_{i=1}^m H(X_i).
            \]
        \item \textit{Conditioning Cannot Increase Entropy:} $H(X \mid Y) \leq H(X)$.
    \end{itemize}
\end{fact}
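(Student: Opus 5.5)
These four items are standard (see \cite{cv05}). I would prove them directly from the definition $H(X) = -\sum_{x} p(x)\log p(x)$ and from one analytic tool, Jensen's inequality for the concave function $\log$. The plan is to establish a single inequality first, Gibbs' inequality (non-negativity of relative entropy), and then derive the first and last items from it. The middle two items are essentially bookkeeping with the definitions.

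\textbf{Gibbs' inequality.} For distributions $p,q$ on a common finite set with $\supp(p)\subseteq\supp(q)$, Jensen gives
\[
-\sum_x p(x)\log\frac{q(x)}{p(x)} \;\geq\; -\log\sum_{x\in\supp(p)} q(x) \;\geq\; 0 .
\]
Equality holds iff $q/p$ is constant on $\supp(p)$ and $q$ puts no mass outside $\supp(p)$, i.e.\ iff $p=q$. This strict-concavity equality case is the one genuinely analytic step, and I expect it to be the main (though mild) obstacle.

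\textbf{First item.} Take $q$ uniform on $\supp(X)$. Then $\log|\supp(X)| - H(X)$ equals the relative entropy of $p_X$ from $q$, which is $\geq 0$. The equality case of Gibbs' inequality gives equality iff $X$ is uniform.

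\textbf{Second item.} This is the definition of $H(X\mid Y)$: expand $-\sum_{x,y}p(x,y)\log p(x\mid y)$, write $p(x,y)=p(y)\,p(x\mid y)$, and group the terms by $y$.

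\textbf{Chain rule.} Write $p(x_1,\dots,x_m)=\prod_i p(x_i\mid x_{<i})$, take $-\log$, and use linearity of expectation. Each resulting term $-\bE[\log p(X_i\mid X_{<i})]$ is exactly $H(X_i\mid X_{<i})$ by the second item. The conditional version is identical after additionally conditioning everything on $Z=z$ and averaging over $z$. For a product distribution, $p(x_i\mid x_{<i})=p(x_i)$, so each term reduces to $H(X_i)$.

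\textbf{Conditioning cannot increase entropy.} Observe that
\[
H(X)-H(X\mid Y)=\sum_{x,y}p(x,y)\log\frac{p(x,y)}{p(x)p(y)},
\]
which is the relative entropy of the joint distribution from the product of its marginals. Here $\supp(p_{XY})\subseteq\supp(p_Xp_Y)$ holds automatically, so Gibbs' inequality gives that this quantity is non-negative.
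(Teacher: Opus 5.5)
Your proposal is correct and is the standard textbook route. Jensen's inequality gives non-negativity of relative entropy, and from it you get the $\log\vert\supp(X)\vert$ bound (by comparing against the uniform law) and $\cI(X;Y)\geq 0$ (by comparing the joint law with the product of marginals), while the conditional-entropy identity and chain rule are direct manipulations of the definitions. The paper gives no proof of its own here and simply cites Cover and Thomas \cite{cv05}, whose arguments are essentially the ones you describe.
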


\begin{fact}\label{fact: binom entropy approx}
    For sufficiently large $n$,
    \[
        \frac{1}{n+1}\cdot 2^{n\cdot H_2(k/n)} \leq \binom{n}{k} \leq 2^{n\cdot H_2(k/n)},
    \]
    where $H_2(x) = -x\log(x) - (1-x)\log(1-x)$ denotes the binary entropy function.
\end{fact}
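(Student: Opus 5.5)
We prove both inequalities with a single identity, the binomial expansion of $1 = (p + (1-p))^n$ at $p = k/n$. The edge cases $k \in \{0, n\}$ are trivial: $\binom{n}{k} = 1$ and $H_2(0) = H_2(1) = 0$ under the convention $0\log 0 = 0$. So we assume $0 < k < n$, which gives $0 < p < 1$. The key observation is that
\[
    p^k (1-p)^{n-k} = 2^{k\log p + (n-k)\log(1-p)} = 2^{n\,(p\log p + (1-p)\log(1-p))} = 2^{-n H_2(p)},
\]
since $k = np$ and $n - k = n(1-p)$. Define the terms $T_j = \binom{n}{j} p^j (1-p)^{n-j}$ for $j = 0,\dots,n$. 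Then $\sum_{j=0}^n T_j = 1$, and the statement becomes
\[
    \frac{1}{n+1} \leq T_k \leq 1 .
\]

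\textbf{Upper bound.} All $T_j$ are nonnegative and sum to $1$, so $T_k \leq 1$. This rearranges to $\binom{n}{k} \leq 2^{nH_2(k/n)}$.

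\textbf{Lower bound.} We show that $T_k$ is the largest of the $n+1$ terms. This immediately gives $T_k \geq 1/(n+1)$, i.e.\ $\binom{n}{k} \geq \frac{1}{n+1}2^{nH_2(k/n)}$. To see the maximality, consider consecutive ratios:
\[
    \frac{T_{j+1}}{T_j} = \frac{n-j}{j+1}\cdot\frac{p}{1-p} = \frac{(n-j)\,k}{(j+1)(n-k)}.
\]
This ratio is at least $1$ exactly when $(n-j)k \geq (j+1)(n-k)$, which simplifies to $j \leq k - \frac{n-k}{n}$. Since $0 < \frac{n-k}{n} < 1$ and $j$ is an integer, this is equivalent to $j \leq k-1$. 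Hence the sequence $T_j$ is nondecreasing for $j \leq k$ and nonincreasing for $j \geq k$, so $T_k = \max_j T_j$.

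The only step needing care is this unimodality calculation, in particular the integer rounding at $j = k-1$ versus $j = k$. Everything else is a direct rewriting. The argument holds for every $n$, not just sufficiently large $n$.
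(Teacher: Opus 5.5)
Your proof is correct, and it is essentially the standard argument from Cover and Thomas that the paper cites instead of proving this fact (the binary-alphabet case of the type-class bound). There, the $k$-th term of $(p+(1-p))^n$ at $p=k/n$ equals $\binom{n}{k}2^{-nH_2(k/n)}$, it is at most $1$, and the same consecutive-ratio computation shows it is the largest of the $n+1$ terms. Your remark that the bound holds for every $n$, not only for sufficiently large $n$, is also right, with $k\in\{0,n\}$ handled by the convention $0\log 0=0$.
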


\begin{fact}\label{fact: data processing inequality}
    Suppose that the conditional distribution of $Z$ given $Y$ is independent of $X$.
    Then $\cI(X ; Z) \leq \cI(Y ; Z)$.
\end{fact}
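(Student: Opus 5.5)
The plan is to read the hypothesis as saying that $X \to Y \to Z$ forms a Markov chain, i.e.\ $Z$ is conditionally independent of $X$ given $Y$. We then expand the joint information $\cI(Z ; X,Y)$ in two different orders using the chain rule, and compare. Everything is reduced to the entropy identities of Fact~\ref{fact: entropy facts}, writing $\cI(A;B) = H(A) - H(A \mid B)$ as in the definition above and $\cI(A ; B \mid C) = H(A \mid C) - H(A \mid B, C)$.

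First, the chain rule for mutual information follows from the chain rule for entropy applied to $H(Z \mid X, Y)$. We have
\[
  \cI(Z ; X,Y) = H(Z) - H(Z \mid X,Y) = \bigl(H(Z) - H(Z\mid Y)\bigr) + \bigl(H(Z\mid Y) - H(Z\mid X,Y)\bigr) = \cI(Z;Y) + \cI(Z;X\mid Y),
\]
and symmetrically
\[
  \cI(Z ; X,Y) = \cI(Z;X) + \cI(Z;Y\mid X).
\]

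Second, the hypothesis gives $\cI(Z;X\mid Y) = 0$. For each $y$, the conditional law of $Z$ given $Y=y$ does not depend on $X$, so $H(Z \mid X, Y=y) = H(Z \mid Y=y)$. Averaging over $y$, using the conditional entropy identity of Fact~\ref{fact: entropy facts}, yields $H(Z\mid X,Y) = H(Z\mid Y)$.

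Third, we use that $\cI(Z;Y\mid X) \ge 0$. This is the statement ``conditioning cannot increase entropy'' from Fact~\ref{fact: entropy facts}, applied pointwise under each conditioning event $X=x$ and then averaged over $x$. Equating the two expansions then gives
\[
  \cI(Z;X) = \cI(Z;Y) - \cI(Z;Y\mid X) \le \cI(Z;Y),
\]
which is the claim by symmetry of mutual information.

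The only slightly delicate point is the second step: we must make precise that ``the conditional distribution of $Z$ given $Y$ is independent of $X$'' means $p(z \mid x,y) = p(z \mid y)$, so that the pointwise conditional entropies agree. The rest is routine bookkeeping with the chain rule.
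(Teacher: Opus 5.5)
Your proof is correct. It is the standard chain-rule argument for the data processing inequality: expand $\cI(Z;X,Y)$ in two orders, use $\cI(Z;X\mid Y)=0$ from the Markov condition, and use $\cI(Z;Y\mid X)\ge 0$. This is exactly the Cover--Thomas proof that the paper cites, since the paper states this Fact without giving a proof of its own.
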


\paragraph{Message-Wise Mutual Information.}
We will also need the \textbf{Message-Wise Mutual Information}, which we define here.
It may be helpful to think of $Y$ as being a message conveying information about $X$.

\begin{definition}[Message-wise mutual information]
    Define 
    \[
        \cJ(X\mid Y=y) = H(X) - H(X\mid Y=y).
    \]
\end{definition}

The following two properties of this message-wise mutual information will be useful to us.

\begin{proposition}\label{prop: I = exp J}
    $\cI(X; Y) = \bE_{y\sim Y}\left[\cJ(X\mid Y=y)\right]$.
\end{proposition}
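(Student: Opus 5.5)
The plan is to unfold both sides into expressions in the joint distribution of $(X,Y)$ and compare them term by term. The key is that $H(X)$ does not depend on $y$, so it passes through the expectation unchanged.

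By definition of $\cJ$,
\[
    \bE_{y\sim Y}\left[\cJ(X\mid Y=y)\right] = \bE_{y\sim Y}\left[H(X) - H(X\mid Y=y)\right].
\]
Using linearity of expectation, and the fact that $H(X)$ is a constant that does not depend on the realisation $y$, this equals
\[
    H(X) - \bE_{y\sim Y}\left[H(X\mid Y=y)\right].
\]

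Next, I will invoke the \emph{Conditional Entropy} item of Fact \ref{fact: entropy facts}, namely $H(X\mid Y) = \bE_{y\sim Y}[H(X\mid Y=y)]$. This rewrites the expression as $H(X) - H(X\mid Y)$, which is exactly $\cI(X;Y)$ by the definition of mutual information given above.

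There is no substantive obstacle. The only point needing care is that the expectation is over $y$ in the support of $Y$, so each conditional entropy $H(X\mid Y=y)$ is well defined. Since all our random variables have finite support, every quantity involved is finite and linearity of expectation applies without issue.
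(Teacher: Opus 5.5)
Your proof is correct and follows the paper's argument step for step. You unfold the definition of $\cJ$, pull the $y$-independent term $H(X)$ out by linearity of expectation, and apply the conditional-entropy item of Fact~\ref{fact: entropy facts} to identify the result with $H(X)-H(X\mid Y)=\cI(X;Y)$.
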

\begin{proof}
    \begin{align*}
        \bE_{y\sim Y}\left[\cJ(X\mid Y=y)\right] &= \bE_{y\sim Y}\big[H(X) - H(X\mid Y=y)\big]\tag{Definition of $\cJ$.}\\
                                    &= H(X) - \bE_{y\sim Y}[H(X\mid Y=y)]\tag{Linearity of expectation.}\\
                                    &= H(X) - H(X\mid Y)\tag{Fact \ref{fact: entropy facts}.}.
    \end{align*}
\end{proof}

\begin{proposition}\label{prop: cJ of product}
    If $X=(X_1,\dots,X_\ell)$ follows a product distribution, then $\cJ(X\mid Y=y) \geq \sum_{i=1}^\ell \cJ(X_i\mid Y=y)$.
\end{proposition}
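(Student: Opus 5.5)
The plan is to unfold the definition of $\cJ$ and reduce the claim to two standard entropy facts. Write
\[
    \cJ(X\mid Y=y) = H(X) - H(X\mid Y=y).
\]
Since $X=(X_1,\dots,X_\ell)$ follows a product distribution, the independence case of the chain rule in Fact \ref{fact: entropy facts} gives $H(X)=\sum_{i=1}^\ell H(X_i)$. For the conditional term, $H(X\mid Y=y)$ is the entropy of the conditional distribution of $X$ given the event $Y=y$. That distribution need not be a product distribution, so we cannot split it into an exact sum. Instead we use subadditivity: by the chain rule, applied to the conditional distribution,
\[
    H(X\mid Y=y)=\sum_{i=1}^\ell H(X_i\mid X_{<i}, Y=y).
\]
Applying \emph{Conditioning Cannot Increase Entropy} to each term, again under the conditional measure given $Y=y$, yields
\[
    H(X\mid Y=y)\le \sum_{i=1}^\ell H(X_i\mid Y=y).
\]

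Combining the two displays, we get
\[
    \cJ(X\mid Y=y) = \sum_{i=1}^\ell H(X_i) - H(X\mid Y=y) \ \ge\ \sum_{i=1}^\ell \bigl(H(X_i)-H(X_i\mid Y=y)\bigr) = \sum_{i=1}^\ell \cJ(X_i\mid Y=y),
\]
which is exactly the statement.

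The only point needing care is the middle step: justifying that the chain rule and the monotonicity of entropy under conditioning both hold for a distribution that has been conditioned on a fixed event $\{Y=y\}$. This is immediate once we view the conditional law as an ordinary joint distribution on $(X_1,\dots,X_\ell)$ and apply Fact \ref{fact: entropy facts} to it. Note also that the product-distribution hypothesis is used only for the unconditioned term $H(X)$, and that the inequality can be strict because conditioning on $Y=y$ may correlate the coordinates.
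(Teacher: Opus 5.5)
Your proof is correct and matches the paper's argument essentially step for step. You use the product structure to write $H(X)=\sum_{i=1}^\ell H(X_i)$, then bound $H(X\mid Y=y)$ by the chain rule followed by ``conditioning cannot increase entropy'', both applied to the law of $X$ conditioned on $Y=y$, exactly as the paper does.
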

\begin{proof}
    We first expand
    \begin{align*}
        \cJ(X\mid Y=y) &= \cJ(X_1,\dots,X_{\ell}\mid Y=y)\\
                       &= H(X_1,\dots,X_{\ell}) - H(X_1,\dots,X_{\ell}\mid Y=y)\tag{Definition of $\cJ$.}\\
                       &= \left[\sum_{i=1}^{\ell} H(X_i)\right] - H(X_1,\dots,X_{\ell}\mid Y=y)\tag{since $(X_1,\dots,X_\ell)$ follows a product distribution.}.
    \end{align*}
    We now consider the second term:
    \begin{align*}
        H(X_1,\dots,X_{\ell}\mid Y=y) &= \sum_{i=1}^{\ell} H\left(X_i\mid X_{<i}, Y=y\right)\tag{Chain rule for conditional entropy.}\\
                                      &\leq \sum_{i=1}^{\ell} H(X_i\mid Y=y)\tag{Conditioning cannot increase entropy on average.},
    \end{align*}
    completing the argument.
\end{proof}

\section{Theorem \ref{thm: CQ}: \MCQ}\label{sec: clique}

We now prove our lower bound for approximate \tMCQ.

\CLIQUE*

\subsection{Step One: A Single Gadget}\label{sec: cq single gadget}

We begin with the description and analysis of a single gadget.

\begin{restatable}{lemma}{SINGLECLIQUE}\label{lem: single CQ}
    Any deterministic protocol which computes a better-than $\left(\frac{n}{12\cdot\log n}\right)$-approximate \MCQ must send at least one message of size $\Omega(n)$ bits.
\end{restatable}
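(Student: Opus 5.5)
We argue by contradiction, following the construction sketched in Section~\ref{sec: techniques}. Fix a deterministic protocol $\pi$ whose messages all have size at most $c\cdot n$, for a small constant $c>0$ to be chosen. We will exhibit an input on which Bob's output is not a better-than-$\left(\frac{n}{12\cdot\log n}\right)$-approximation.

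\textbf{Input distribution for Alice.} Sample $\GBA\sim\cG_{n,1/2}$ and, independently, a uniformly random $A\subseteq V$ with $\vert A\vert = n/2$. Give Alice $E_A = (V, E(\GBA[A]))$. Since $M$ is a deterministic function of $E_A$, and $E_A$ determines $A$ (up to a negligible event we may discard), Fact~\ref{fact: data processing inequality} and Fact~\ref{fact: IC vs CC} give $\cI(A;M)\le \cI(E_A;M)\le cn$. By Proposition~\ref{prop: I = exp J} and Markov's inequality, some message value $m$ then satisfies both
\[
\cJ(A\mid M=m)\le 2cn
\quad\text{and}\quad
\Pr(\omega(\GBA)\le 3\log n\mid M=m)\ge 1-O(1/n).
\]
We will use Proposition~\ref{prop: random graph clique} together with an averaging argument over $m$ to get the second condition.

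\textbf{The common core.} For this fixed $m$, let $\VCO=\bigcap\{A' : \Pr(A=A'\mid M=m)>0\}$. The key counting step is to bound $\vert \VCO\vert$ from above. Every set in the support of $A\mid M=m$ contains $\VCO$, so
\[
H(A\mid M=m)\le\log\binom{n-\vert\VCO\vert}{n/2-\vert\VCO\vert}.
\]
We also have $H(A)=\log\binom{n}{n/2}\ge n-\log(n+1)$ by Fact~\ref{fact: binom entropy approx}. If $\vert\VCO\vert=\delta n$, the right-hand side above is at most $(1-\delta)n\cdot H_2\!\left(\frac{1/2-\delta}{1-\delta}\right)<(1-\delta)n$. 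Hence $\cJ(A\mid M=m)\ge \delta n-\log(n+1)$, and choosing $c$ small forces $\vert\VCO\vert\le 3cn$, say. This is the step I expect to be the main obstacle. One must make the entropy bookkeeping honest, because conditioning on $M=m$ also changes the distribution of $\GBA$. So I would define everything in terms of the pair $(A,\GBA[A])$ and apply the support bound only to the marginal of $A$.

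\textbf{Bob's input and the final argument.} Pick any $A'$ in the support of $A\mid M=m$, and let $E_B$ consist of all pairs in $A'\setminus\VCO$ that are not already edges of $\GBA[A']$. We must check that $E_B$ is disjoint from $E_A$ for every consistent input. If not, Bob's input is modified to exclude pairs in $\GBA$; this is fine because Bob's input may depend on $m$ and $A'$ but not on the realized $A$.

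\emph{Clique when $A=A'$.} In the realization where $A=A'$, which has positive probability, $G$ contains a clique on $A\setminus\VCO$ of size at least $n/2-3cn\ge n/4$.

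\emph{Bob's output is small.} Bob's output $U$ depends only on $(m,E_B)$, so it must be a clique for every consistent realization $(A,\GBA)$. Suppose $\vert U\vert> 3\log n$. Then, since $\omega(\GBA)\le 3\log n$, $U$ must contain an edge not in $E(\GBA)$ that comes from $E_B$. Any edge of $E_B$ touches a vertex $v\notin\VCO$, so there is a consistent $A''\not\ni v$. In the realization $A''$, that edge is missing from $G$. Here we need the edge to be absent from $\GBA$ in some consistent realization; we get this by averaging over $\GBA$, possibly passing to a good fixed $\GBA$ as well. Hence $\vert U\vert\le 3\log n$.

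Combining these, in the realization $A=A'$ the approximation ratio is at least $\frac{n/4}{3\log n}=\frac{n}{12\log n}$, contradicting the assumed guarantee. Therefore some message must have size $\Omega(n)$.
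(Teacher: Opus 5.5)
Your compression step is sound, and your worry that conditioning on $M=m$ also reshapes $\GBA$ is legitimate. The paragraph \emph{Bob's output is small}, however, does not go through.

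First, it contains a logical error. Every edge of $E_B$ is Bob's own input, so it is present in $G$ in \emph{every} realization consistent with $(m,E_B)$. Exhibiting a consistent $A''\not\ni v$ can therefore never make ``that edge'' disappear. The pairs of $U$ Bob can be wrong about are exactly those \emph{not} in $E_B$, and the case analysis has to be organised around them:
\begin{itemize}
\item If $U$ meets both $A'\setminus\VCO$ and its complement, a cross pair $(u,w)$ with $u\in A'\setminus\VCO$ is not in $E_B$. It is destroyed by a consistent $A''\not\ni u$.
\item If $U$ avoids $A'\setminus\VCO$, its pairs are Alice edges common to all consistent inputs, hence edges of $\GBA$.
\item If $U\subseteq A'\setminus\VCO$, every pair of $U$ must lie in $E_B$, so $U$ is a clique of $\overline{\GBA}$. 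This case needs $\omega(\overline{\GBA})\le 3\log n$, which you never secure; you only control $\omega(\GBA)$.
\end{itemize}

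Second, and more seriously, each witness $A''$ must be a \emph{legal} input. Its edge set must be disjoint from $E_B=\overline{g'}[A'\setminus\VCO]$, where $g'$ is the base graph used to build $E_B$; otherwise $G$ is not simple and Bob owes nothing on it. You defined $\VCO$ as the intersection over the whole support of $A\mid M=m$, i.e.\ over all base graphs. So the witness omitting $v$ may exist only for some other base graph $g''$, and then $g''[A'']$ can share pairs with $E_B$. Only witnesses of the form $g'[A'']$ are guaranteed legal. These certify only the vertices outside $C_{g'}=\bigcap\{A'' : M(g'[A''])=m\}\supseteq\VCO$. On $C_{g'}\setminus\VCO\subseteq A'\setminus\VCO$ your argument has no witness, and Bob may be certain of both the $g'$-edges and the $E_B$-edges there, i.e.\ of a complete graph. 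Your suggested patch of deleting pairs from $E_B$ would in turn break the clique on $A'\setminus\VCO$ that you need when $A=A'$.

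The repair is to do what the paper's construction does. Fix $\GBA=g$ when defining $m^{-1}$, $\VCO$ and $E_B$, so that every $g[A'']$ with $A''\in m^{-1}$ is automatically disjoint from $E_B\subseteq\overline{g}$ (Lemma~\ref{lem: m+E_B = m}). Then make the entropy bookkeeping match by conditioning on $\GBA$ instead of passing to the full support. Since $A$ is independent of $\GBA$, $\bE_{(g,m)}[H(A)-H(A\mid M=m,\GBA=g)]=\cI(A;M\mid\GBA)\le H(M)\le cn$. Markov plus a union bound then gives a pair $(g,m)$ whose fixed-$g$ core has size $O(cn)$ and with $\omega(g),\omega(\overline{g})\le 3\log n$. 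Bob's certain edges then lie in $E_B\cup g[\VCO]$, which are two vertex-disjoint graphs, so $\vert U\vert\le\max\{\omega(\overline{g}),\omega(g)\}\le 3\log n$.
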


\subsubsection{The Hard Communication Game}

\begin{tcolorbox}[standard jigsaw,opacityback=0,width=0.99\textwidth,breakable]\label{input: single clique}
    {\large{\underline{{\textbf{Input Distribution \ref{input: single clique}: Single-Gadget Clique Communication Game}}}}}
    \vspace{2em}

    \textbf{Initialisation:} Sample $\GBA\sim\cG_{n,1/2}$.

    \vspace{2em}

    \textbf{Alice's Input Distribution:} Let $V = [n]$.

    \begin{enumerate}
        \item Uniformly randomly sample $A\subseteq V$ such that $\vert A \vert = n/2$.
        \item Return $E_A = (V, E(G_{\text{Base}}[A]))$.
    \end{enumerate}
 
    Given that $G_{\text{Base}}$ is fixed at this stage, it will be convenient to refer to Alice's input via either $E_A$ or $A$ interchangeably.
    \vspace{1em}

    \textbf{The Compression (Message) Step:}
    Let $m$ denote the (fixed) message sent by Alice to Bob, and let $m^{-1}$ denote the pre-image of $m$, i.e. the set of Alice's inputs which all map to the message $m$.
    Then, define
    \[
        \VCO = \big\{ v\in V : \text{ $v\in A'$ for all $A'\in m^{-1}$}\big\},
    \]

    \textbf{Bob's Input Distribution:}

    \begin{enumerate}
        \item Randomly select $A'\in m^{-1}$.
        \item Return $E_B = (V,E(\overline{G_{\text{Base}}}[A'\setminus\VCO]))$.
    \end{enumerate}

    We will use $OUT$ to denote Bob's final output, and $G$ to denote the (combined) shared input graph.
\end{tcolorbox}

Throughout analysing this hard communication game, we remind the reader of Figure \ref{fig: single cq} to aid in understanding.

Our first step is to prove that the constructed graph $G$ is always a valid simple graph, and hence a valid input graph.

\begin{lemma}\label{lem: single cq G simple}
    $G$ is a simple graph.
\end{lemma}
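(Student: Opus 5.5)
We need to show that $G$ is simple, meaning $E_A$ and $E_B$ share no edge, so the combined multiset has no repeated edges. Self-loops are not an issue: both edge sets are taken from $\GBA$ or its complement $\overline{\GBA}$, and these are simple graphs on $V$. So the only thing to check is that no edge lies in both $E_A$ and $E_B$.

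The plan is a direct case check on a hypothetical common edge $\{u,v\}$.

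\begin{itemize}
\item Since $\{u,v\} \in E_A = E(\GBA[A])$, we have $u,v\in A$ and $\{u,v\}\in E(\GBA)$.
\item Since $\{u,v\}\in E_B = E(\overline{\GBA}[A'\setminus\VCO])$, we have $\{u,v\}\in E(\overline{\GBA})$, i.e.\ $\{u,v\}\notin E(\GBA)$.
\end{itemize}

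These two conclusions contradict each other. Hence $E(\GBA)$ and $E(\overline{\GBA})$ are disjoint, so $E_A\cap E_B=\emptyset$.

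Notably, this argument never uses the structure of $A'$ or $\VCO$; it relies only on the complement. This matters because $A'$ may differ from $A$, and the two sets may overlap arbitrarily. The disjointness of a graph and its complement makes the argument uniform over the random choice of $A'\in m^{-1}$ and over all $\GBA$. It also requires only that $m^{-1}$ be nonempty so that $A'$ exists, which holds because $A\in m^{-1}$.

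There is no real obstacle here. The only subtle point is to be careful that "simple" refers to the union of the two parties' edge sets, so that the partition of $E$ between Alice and Bob is well defined. With that understood, the union $E_A\cup E_B$ is a set of distinct unordered pairs of distinct vertices, and $G=(V,E_A\cup E_B)$ is a valid simple input graph.
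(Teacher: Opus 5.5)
Your proof is correct and is essentially the paper's argument: $E(E_A)\subseteq E(\GBA)$ and $E(E_B)\subseteq E(\overline{\GBA})$, and these two edge sets are disjoint, so no edge is duplicated in $G$. Your extra remarks (no self-loops, and $m^{-1}\neq\emptyset$ because $A\in m^{-1}$) are harmless additions that the paper leaves implicit.
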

\begin{proof}
    By construction both $E(E_A)\subseteq E\left(G_{\text{Base}}\right)$ and $E(E_B)\subseteq E\left(\overline{G_{\text{Base}}}\right)$.
    The result then follows from the fact that $E\left(G_{\text{Base}}\right)\cap E\left(\overline{G_{\text{Base}}}\right)=\emptyset$.
\end{proof}

We now proceed to analyse this hard game, and the constructed input graphs.

\subsubsection{Understanding \texorpdfstring{$\vert \VCO \vert$}{| Vcommon |} (the Compression Step)}\label{sec: single cq compression}

We begin by considering the set $\VCO$.
The heart of our compression step is to bound the size of this set.
We do so by proving that when the message has a low information content, then the resulting uncertainty regarding $A$ must cause $\vert \VCO \vert$ to also be small.

\begin{lemma}\label{lem: cJ >= VCO}
    $\cJ(A \mid M = m) \geq \vert\VCO\vert - \log(n+1)$.
\end{lemma}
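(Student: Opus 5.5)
We prove the bound by writing out the two entropies in the definition of $\cJ(A\mid M=m) = H(A) - H(A\mid M=m)$, bounding each, and using $\VCO$ to cap the size of the pre-image $m^{-1}$.

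\textbf{Prior entropy.} Since $A$ is uniform over all $\binom{n}{n/2}$ subsets of size $n/2$, Fact \ref{fact: entropy facts} gives $H(A) = \log\binom{n}{n/2}$. The lower bound in Fact \ref{fact: binom entropy approx}, with $H_2(1/2)=1$, then yields
\[
    H(A) \geq n - \log(n+1).
\]

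\textbf{Posterior entropy.} The protocol is deterministic, so conditioned on $M=m$ the random variable $A$ is supported on $m^{-1}$. Fact \ref{fact: entropy facts} therefore gives $H(A\mid M=m) \leq \log\vert m^{-1}\vert$. It remains to bound $\vert m^{-1}\vert$ through $\VCO$, which is the only real step.

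\textbf{Counting the pre-image.} Every $A'\in m^{-1}$ contains $\VCO$ by definition. So $A'$ is determined by the subset $A'\setminus\VCO$ of $V\setminus\VCO$, which has $n-\vert\VCO\vert$ elements. Hence
\[
    \vert m^{-1}\vert \leq 2^{\,n-\vert\VCO\vert}.
\]
One could sharpen this using $\vert A'\vert = n/2$, giving $\binom{n-\vert\VCO\vert}{n/2-\vert\VCO\vert}$, but the crude bound suffices. It gives $H(A\mid M=m)\leq n-\vert\VCO\vert$.

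\textbf{Combining.} Putting the two bounds together,
\[
    \cJ(A\mid M=m) \geq n-\log(n+1) - (n-\vert\VCO\vert) = \vert\VCO\vert - \log(n+1).
\]

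The only subtle point is the $\log(n+1)$ loss, which comes entirely from the binomial estimate. If Fact \ref{fact: binom entropy approx} is invoked "for sufficiently large $n$", small $n$ may need a separate remark, or the direct bound $\binom{n}{n/2}\geq 2^n/(n+1)$, which holds for all $n$ because $\binom{n}{n/2}$ is the largest of the $n+1$ binomial coefficients summing to $2^n$.
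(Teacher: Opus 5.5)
Your proof is correct and follows the paper's argument: the same decomposition $\cJ(A\mid M=m)=H(A)-H(A\mid M=m)$, the same lower bound $H(A)\geq n-\log(n+1)$, and an upper bound on the posterior entropy obtained from the fact that every $A'\in m^{-1}$ contains $\VCO$. The only difference is that you bound the posterior support by $2^{n-\vert\VCO\vert}$ directly, whereas the paper uses $\binom{n-\vert\VCO\vert}{n/2-\vert\VCO\vert}$ together with the binary-entropy estimate and $H_2\leq 1$. Your version is a harmless simplification, and your observation that $\binom{n}{n/2}\geq 2^n/(n+1)$ holds for every even $n$ removes the ``sufficiently large $n$'' caveat.
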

\begin{proof}
    We first decompose
    \begin{equation}
        \cJ(A \mid M=m) = H(A) - H(A\mid M = m),\label{eq: cJ vs Vc}
    \end{equation}
    and now consider these two terms separately.

    \begin{claim}\label{claim: H(A)}
        $H(A) \geq n - \log(n+1)$.
    \end{claim}
    \begin{proof}
        Since $A$ is uniform on a support of size $\binom{n}{n/2}$, we can write
        \begin{align*}
            H(A) &= \log\binom{n}{n/2}\tag{Fact \ref{fact: entropy facts}.}\\
                 &\geq \log\left(\frac{2^{n\cdot H_2(1/2)}}{n+1}\right)\tag{Fact \ref{fact: binom entropy approx}.}\\
                 &= n - \log(n+1)\tag{$H_2(1/2) = 1$.}.
        \end{align*}
    \end{proof}

    \begin{claim}\label{claim: H(A|M)}
        $H(A\mid M = m) \leq n - \vert\VCO\vert$.
    \end{claim}
    \begin{proof}
        By viewing each element from $\VCO$ as guaranteed to belong in $A$, $n/2 - \vert\VCO\vert$ further vertices must be chosen from a ground set of size $n - \vert\VCO\vert$.
        Therefore, the conditional distribution of $A$ given $M = m$ must have a support of size at most
        \[
            \binom{n-\vert\VCO\vert}{n/2-\vert\VCO\vert},
        \]
        allowing us to write
        \begin{align*}
            H(A\mid M = m) &\leq \log\binom{n-\vert\VCO\vert}{n/2-\vert\VCO\vert}\tag{Fact \ref{fact: entropy facts}.}\\
                           &\leq \log\left(2^{(n-\vert\VCO\vert)\cdot H_2\left(\frac{n/2-\vert\VCO\vert}{n-\vert \VCO\vert}\right)}\right)\tag{Fact \ref{fact: binom entropy approx}.}\\
                           &\leq \log\left(2^{n-\vert\VCO\vert}\right)\tag{$H_2(x)\leq 1$ for all $x\in[0,1]$.}\\
                           &= n - \vert\VCO\vert.
        \end{align*}
    \end{proof}

    We now complete the proof by recombining Claims \ref{claim: H(A)} and \ref{claim: H(A|M)} into Equation \ref{eq: cJ vs Vc} to obtain
    \begin{align*}
        \cJ(A\mid M = m) &= H(A) - H(A\mid M = m)\\
                        &\geq n - \log(n+1) - n + \vert\VCO\vert\\
                        &= \vert\VCO\vert - \log(n+1).
    \end{align*}
\end{proof}

\begin{corollary}\label{cor: J vs VCO}
    If $\cJ(A \mid M = m) = o(n)$ bits, then $\vert\VCO\vert = o(n)$.
\end{corollary}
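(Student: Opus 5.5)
This follows directly from Lemma \ref{lem: cJ >= VCO} by rearranging its inequality. The lemma gives, for every fixed message $m$,
\[
    \vert\VCO\vert \leq \cJ(A \mid M = m) + \log(n+1).
\]
Under the hypothesis, the first term on the right is $o(n)$. The second term $\log(n+1)$ is $O(\log n)$, which is also $o(n)$. A sum of two $o(n)$ quantities is $o(n)$, so $\vert\VCO\vert = o(n)$.

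The only point needing care is what the asymptotic statement means, since $m$ depends on $n$ through the protocol. I would read it as applying to any family of messages $m = m_n$, one per input size, with $\cJ(A\mid M=m_n)\le f(n)$ for some $f(n)=o(n)$. The bound above then gives $\vert\VCO\vert \le f(n)+\log(n+1)$ uniformly, and this is $o(n)$. Lemma \ref{lem: cJ >= VCO} holds for every message $m$ with no dependence on the protocol or on $\GBA$, so the corollary also holds with no extra conditions.

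I expect no real obstacle here. The substantive work, bounding $H(A)$ from below and $H(A\mid M=m)$ from above through the support-size argument, is already done in Lemma \ref{lem: cJ >= VCO}. The only thing to check is that the additive $\log(n+1)$ loss from Fact \ref{fact: binom entropy approx} is negligible against linear terms, which it is.
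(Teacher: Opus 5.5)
Your proof is correct and is exactly the paper's argument: the paper states this corollary without proof, as an immediate rearrangement of Lemma~\ref{lem: cJ >= VCO}, namely $\vert\VCO\vert \le \cJ(A\mid M=m) + \log(n+1)$, with the additive $\log(n+1)$ term absorbed into $o(n)$. Your clarification that the asymptotics should be read over a family of messages $m = m_n$ is a sensible point that the paper leaves implicit.
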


We now turn to consider the information contents of the messages themselves.

\begin{lemma}\label{lem: single clique average message cJ}
    If the protocol only sends messages of size $o(n)$, then 
    \[
        \Pr_{A,\GBA}(\cJ(A \mid M = m(E_A)) = o(n)) \geq 1-o(1).
    \]
\end{lemma}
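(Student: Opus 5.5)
Let $c(n)=o(n)$ denote the message length, so $M$ takes at most $2^{c(n)+1}$ values. The plan is to bound the average of $\cJ(A\mid M=m)$ by the information cost, and then pass from the average to a high-probability statement with Markov's inequality.

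First, fix $\GBA$. For each fixed $\GBA$ the message is a deterministic function of $A$. By Proposition \ref{prop: I = exp J},
\[
\bE_{m\sim M}\big[\cJ(A\mid M=m)\big]=\cI(A;M)=H(M)-H(M\mid A)=H(M).
\]
Fact \ref{fact: entropy facts} then gives $H(M)\le\log|\supp(M)|\le c(n)+1$; alternatively one can invoke Fact \ref{fact: IC vs CC} directly. Here $H(A)$ is computed for the uniform distribution of $A$, which does not depend on $\GBA$, so the definition of $\cJ$ is consistent. Averaging over $\GBA$ gives $\bE_{A,\GBA}[\cJ(A\mid M=m(E_A))]\le c(n)+1$.

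The main obstacle is that $\cJ(A\mid M=m)$ can in principle be negative. If $H(A\mid M=m)>H(A)$ we would have negative values, and Markov's inequality would fail. Here this cannot happen: $A$ is uniform on its support, so $H(A\mid M=m)\le\log|\supp(A)|=H(A)$ by Fact \ref{fact: entropy facts}, and hence $\cJ\ge 0$ pointwise.

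Now choose a threshold $t(n)=\sqrt{n\cdot(c(n)+1)}$. Since $c(n)=o(n)$, we have $t(n)=o(n)$ and $(c(n)+1)/t(n)=\sqrt{(c(n)+1)/n}=o(1)$. Markov's inequality then gives
\[
\Pr_{A,\GBA}\big(\cJ(A\mid M=m(E_A))>t(n)\big)\le \frac{c(n)+1}{t(n)}=o(1).
\]
So with probability $1-o(1)$ we have $\cJ\le t(n)=o(n)$, which is the statement of the lemma.
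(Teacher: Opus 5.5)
Your proof is correct and follows essentially the paper's argument: bound the average of $\cJ(A\mid M=m)$ by the message length via Proposition \ref{prop: I = exp J}, then pass to a high-probability statement (the paper's proof by contradiction is Markov's inequality in disguise). Your version is slightly more careful in two places. You make explicit the pointwise nonnegativity $\cJ\ge 0$, which follows from the uniformity of $A$ and which the paper's averaging step uses without stating it. And by fixing $\GBA$ before computing $\cJ$, you bound exactly the quantity to which Lemma \ref{lem: cJ >= VCO} is later applied, since $\VCO$ is defined with $\GBA$ held fixed.
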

\begin{proof}
    First, by Facts \ref{fact: IC vs CC} and \ref{fact: data processing inequality},
    \[
        \cI_{A,\GBA}(A ; M) \leq \cI_{A,\GBA}(E_A ; M) = o(n).
    \]
    However, for a contradiction, if
    \begin{align*}
        \Pr_{m\sim M}(\cJ(A \mid M = m(E_A)) = o(n)) < 1 - \Omega(1),
    \end{align*}
    then it must be that
    \[
        \Pr_{A,\GBA}(\cJ(A \mid M = m(E_A)) = \Omega(n)) \geq \Omega(1),
    \]
    and so
    \begin{align*}
        \cI_{A,\GBA}(A ; M) &= \bE_{A,\GBA}\left[\cJ(A \mid M = m(E_A))\right]\tag{Via Proposition \ref{prop: I = exp J} since $M$ is a deterministic function of only $A$ and $\GBA$.}\\
                   &\geq \Omega(n) \cdot \Pr_{A,\GBA}(\cJ(A \mid M = m(E_A)) = \Omega(n))\\
                   &= \Omega(n)\cdot\Omega(1)\\
                   &= \Omega(n),
    \end{align*}
    contradicting that $\cI_{A,\GBA}(A ; M) = o(n)$, and completing the proof.
\end{proof}

\begin{lemma}\label{lem: m^-1 > 1}
    If $\cJ(A \mid M = m) = o(n)$ bits, then $\vert m^{-1} \vert > 1$.
\end{lemma}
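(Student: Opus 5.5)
We argue by contraposition: we show that if $\vert m^{-1}\vert = 1$, then $\cJ(A \mid M = m)$ is essentially maximal, and in particular is $\Omega(n)$ rather than $o(n)$.

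The first step is to pin down the conditional entropy. Throughout, we treat $\GBA$ as fixed, exactly as the game description does, so that Alice's input is identified with $A$ and $m^{-1}$ is a set of subsets $A'\subseteq V$. If $\vert m^{-1}\vert = 1$, say $m^{-1}=\{A\}$, then conditioned on $M=m$ the variable $A$ is deterministic. Its conditional support has size one, so Fact \ref{fact: entropy facts} gives $H(A\mid M=m)\le \log 1 = 0$.

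The second step combines this with Claim \ref{claim: H(A)}, which gives $H(A)\ge n-\log(n+1)$. This yields
\[
\cJ(A\mid M=m) = H(A) - H(A\mid M=m) \ge n-\log(n+1) = \Omega(n).
\]
That contradicts $\cJ(A\mid M=m)=o(n)$. Since $m$ is an actual message, $m^{-1}$ is nonempty, so we conclude $\vert m^{-1}\vert>1$.

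There is an alternative route through Lemma \ref{lem: cJ >= VCO}. When $m^{-1}=\{A\}$ we have $\VCO = A$, so $\vert\VCO\vert=n/2$, and that lemma gives $\cJ \ge n/2-\log(n+1)=\Omega(n)$. This reaches the same contradiction and also ties in with Corollary \ref{cor: J vs VCO}.

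The only step needing care is the meaning of the entropies. $H(A)$ is taken over the uniform choice of $A$, and the conditioning on $M=m$ happens with $\GBA$ fixed. We should check that $m^{-1}$ is interpreted as a set of subsets $A'$ rather than of pairs $(A,\GBA)$; this is consistent with Alice's input being identified with $A$ once $\GBA$ is fixed, and with Lemma \ref{lem: single clique average message cJ}. Beyond this bookkeeping, the argument is routine.
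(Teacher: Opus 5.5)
Your proof is correct and is essentially the paper's own argument: by contraposition, $\vert m^{-1}\vert = 1$ forces $H(A\mid M=m)=0$, so $\cJ(A\mid M=m)=H(A)\geq n-\log(n+1)$, which is not $o(n)$. Your alternative route through Lemma~\ref{lem: cJ >= VCO}, using $\VCO = A$, is also valid but not needed.
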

\begin{proof}
    We prove the contrapositive: suppose that $\vert m^{-1} \vert = 1$.
    Then, upon receiving $m$, Bob is able to infer $A$ directly, with no uncertainty, and so $H(A\mid M = m) = 0$.
    This then yields
    \[
        \cJ(A \mid M = m) = H(A) - H(A\mid M = m) = H(A) = (1-o(1))\cdot n.
    \]
\end{proof}

\begin{lemma}\label{lem: A in m^-1}
    $A\in m^{-1}$.
\end{lemma}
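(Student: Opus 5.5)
This is essentially immediate from the definitions, so the plan is simply to unwind them. By construction of Input Distribution \ref{input: single clique}, the message $m$ is the value $M(E_A)$ computed by Alice's deterministic protocol on her actual input $E_A$. The pre-image $m^{-1}$ is the set of all of Alice's possible inputs that are mapped to $m$. With $\GBA$ fixed, we identify inputs with subsets $A'\subseteq V$ of size $n/2$ via $E_{A'} = (V, E(\GBA[A']))$. Since $M(E_A) = m$ by definition, $E_A$ lies in the pre-image, and so $A\in m^{-1}$.

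The only point needing care is the convention, noted in the input distribution, that Alice's input may be referred to by $E_A$ or $A$ interchangeably. I would check that the map $A'\mapsto E_{A'}$ is well defined on the support of Alice's distribution, so that stating membership in terms of $A$ rather than $E_A$ is legitimate.

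There is a subtlety: if $\GBA[A']$ has isolated vertices, two different subsets $A'$ could give the same edge set. In that case I would read $m^{-1}$ as a set of subsets $A'$, namely those with $M(E_{A'}) = m$. Then $A$ belongs to it trivially, because $M(E_A)=m$. Either reading gives the lemma.

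I expect no real obstacle; the lemma's role is only to record that Bob's random choice of $A'\in m^{-1}$ has positive probability of equalling $A$. I would add a remark to that effect: when $A'$ is chosen uniformly, $\Pr(A' = A) = 1/\vert m^{-1}\vert > 0$. This is the fact used later to argue that $G$ contains the clique $A\setminus\VCO$ with positive probability.
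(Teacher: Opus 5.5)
Your proposal is correct and matches the paper's proof, which simply observes that the message is a deterministic function of $E_A$, itself determined by $A$ once $\GBA$ is fixed, so $A$ lies in the pre-image of the message it generates. Reading $m^{-1}$ explicitly as the set of subsets $A'$ with $M(E_{A'}) = m$ is a harmless refinement. It makes the paper's identification of $A$ with $E_A$ precise and is consistent with how $\VCO$ and Bob's choice of $A'$ are later defined.
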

\begin{proof}
    The message is a deterministic function of $E_A$, which itself depends on $A$.
\end{proof}

\subsubsection{Understanding \texorpdfstring{$\omega(G)$}{omega(G)}}

We now turn to understand the constructed input graph $G$.

\begin{lemma}\label{lem: clique when A'=A}
    If $A' = A$, then $\omega(G)\geq n/2 - \vert\VCO\vert$.
\end{lemma}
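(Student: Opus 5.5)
I will exhibit an explicit clique, namely the vertex set $U = A\setminus\VCO$. First, since $A\in m^{-1}$ by Lemma~\ref{lem: A in m^-1}, every vertex of $\VCO$ lies in $A$. Hence $\VCO\subseteq A$ and $\vert U\vert = \vert A\vert - \vert\VCO\vert = n/2 - \vert\VCO\vert$.

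Next I show that every pair of distinct vertices $u,v\in U$ is adjacent in $G$, splitting according to $\GBA$.
\begin{itemize}
    \item If $\{u,v\}\in E(\GBA)$, then since $u,v\in A$ the edge lies in $E(\GBA[A])$, which is exactly Alice's input $E_A$.
    \item If $\{u,v\}\notin E(\GBA)$, then $\{u,v\}\in E(\overline{\GBA})$. Since $A'=A$ and $u,v\in A'\setminus\VCO$, the edge lies in $E(\overline{\GBA}[A'\setminus\VCO])$, which is exactly Bob's input $E_B$.
\end{itemize}
In either case $\{u,v\}\in E(G)$, because $G$ is the union of $E_A$ and $E_B$ (and is a simple graph by Lemma~\ref{lem: single cq G simple}). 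So $G[U]$ is complete, and $\omega(G)\geq \vert U\vert = n/2-\vert\VCO\vert$.

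There is no real obstacle here. The only point needing care is the inclusion $\VCO\subseteq A$, which is what makes $\vert A\setminus\VCO\vert$ equal $n/2-\vert\VCO\vert$ rather than merely bounded below by it. Lemma~\ref{lem: A in m^-1} supplies that inclusion.
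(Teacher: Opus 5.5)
Your proof is correct and follows the paper's argument: the paper also takes $A\setminus\VCO$ and observes that $E_A$ and $E_B$ restricted to it are $\GBA[A\setminus\VCO]$ and $\overline{\GBA}[A\setminus\VCO]$, whose union is complete, which is your edge-by-edge case split. Your explicit use of Lemma~\ref{lem: A in m^-1} to get $\VCO\subseteq A$ fills in a step the paper uses silently, though for the stated inequality the bound $\vert A\setminus\VCO\vert\geq \vert A\vert-\vert\VCO\vert$ would already suffice.
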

\begin{proof}
    By construction,
    \[
        E_A[A\setminus\VCO] = G_{\text{Base}}[A\setminus\VCO],
    \]
    and
    \[
        E_B[A\setminus\VCO] = \overline{G_{\text{Base}}}[A\setminus\VCO].
    \]
    We now use this to write
    \begin{align*}
        E\left(G[A\setminus\VCO]\right) &= E\left(E_A[A\setminus\VCO]\right) \cup E\left(E_B[A\setminus\VCO]\right)\\
                             &= E\left(G_{\text{Base}}[A\setminus\VCO]\right) \cup E\left(\overline{G_{\text{Base}}}[A\setminus\VCO]\right)\\
                             &= E\left((G_{\text{Base}}\cup\overline{G_{\text{Base}}})[A\setminus\VCO]\right)\\
                             &= E\left(K_n[A\setminus\VCO]\right),
    \end{align*}
    which yields a clique in $G$ of size
    \[
        \vert A\setminus\VCO \vert = \vert A\vert - \vert\VCO \vert = n/2 - \vert\VCO\vert.
    \]
\end{proof}

\subsubsection{Understanding Bob's Output}

We now turn to consider the cliques which Bob is able to output.

The first step towards this is to argue that Bob's input reveals no additional information about $A$ given $m$. From an information theoretic perspective, this is equivalent to showing that $E_B$ and $E_A$ are conditionally independent given $M = m$.

\begin{lemma}\label{lem: m+E_B = m}
    Let $(m+E_B)^{-1}$ denote the set of graphs which Alice may hold, given both $m$ and $E_B$.
    Then, $(m+E_B)^{-1} = m^{-1}$.
\end{lemma}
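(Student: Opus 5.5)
We need to show that knowing $E_B$ in addition to $m$ does not narrow down Alice's possible inputs. We prove two inclusions. The inclusion $(m+E_B)^{-1}\subseteq m^{-1}$ is immediate: any graph consistent with both $m$ and $E_B$ is in particular consistent with $m$. All the work is in the reverse inclusion $m^{-1}\subseteq(m+E_B)^{-1}$. For this we show that every $A''\in m^{-1}$ is a possible input for Alice under the joint conditioning on $M=m$ and the observed $E_B$.

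The key observation concerns how $E_B$ is produced. Given $m$ (and the fixed $\GBA$), it is generated by first selecting $A'\in m^{-1}$ at random and then outputting $E(\overline{\GBA}[A'\setminus\VCO])$. Crucially, $\VCO$ is a function of $m$ alone, and the choice of $A'$ is made independently of which element of $m^{-1}$ Alice actually holds. So the conditional distribution of $E_B$ given $(M=m, A=A'')$ is the same for every $A''\in m^{-1}$. It is exactly the distribution induced by a random $A'\in m^{-1}$, and it does not depend on $A''$.

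It follows that any value $e_B$ of $E_B$ occurring with positive probability alongside $m$ occurs with positive probability alongside every $A''\in m^{-1}$. Indeed, the probability $\Pr(E_B=e_B\mid M=m, A=A'')$ equals $\Pr_{A'}(E(\overline{\GBA}[A'\setminus\VCO])=e_B)$, which is positive and independent of $A''$. Since each $A''\in m^{-1}$ itself has positive prior probability, Bayes' rule gives $\Pr(A=A''\mid M=m, E_B=e_B)>0$. Hence $A''\in(m+E_B)^{-1}$, establishing the reverse inclusion. In information-theoretic terms, this says $E_A$ and $E_B$ are conditionally independent given $M=m$.

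The main obstacle is making precise what $(m+E_B)^{-1}$ means, since $E_B$ is not a deterministic function of Alice's input. I would define it as the set of $A''$ with positive posterior probability given $(m,E_B)$, rather than as a deterministic preimage, and state this convention explicitly. A secondary point to check is that $E_B$ genuinely carries no dependence on $A$ beyond $m$. This holds because the game specifies that $A'$ is sampled from $m^{-1}$ with a rule depending only on $m$ (for instance uniformly), and that $\VCO$ is defined from $m^{-1}$ alone. I would fix the uniform choice to make this explicit.
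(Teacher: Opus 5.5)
Your proof is correct and takes the same approach as the paper. The paper argues by contradiction that no $E_{A'}\in m^{-1}$ can be excluded: $G$ is always simple (Lemma \ref{lem: single cq G simple}), and $E_B$ is built solely from the common message $m$, so it is a valid input for Bob against every element of $m^{-1}$. You make that second point precise by showing that the conditional law of $E_B$ given $(M=m, A=A'')$ does not depend on $A''$ and then applying Bayes' rule; with your positive-posterior definition of $(m+E_B)^{-1}$, the simplicity check is covered automatically.
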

\begin{proof}
    We first observe that $(m+E_B)^{-1}\subseteq m^{-1}$, since $E_B$ can only further restrict $m^{-1}$.
    Now, for a contradiction, suppose that $(m+E_B)^{-1}\subset m^{-1}$.
    Then, by our above observation, there must exist some $E_{A'}\in m^{-1}\setminus(m+E_B)^{-1}$.
    I.e., conditioned on $m$, $E_{A'}$ is specifically ruled out by $E_B$.

    The only way for this to be the case is if either $G = E_{A'}\cup E_B$ is not a simple graph, or if $E_B$ is not a possible input for Bob to hold when Alice holds $E_{A'}$.

    However, by Lemma \ref{lem: single cq G simple}, it cannot be the case that $G$ is not simple, and we also see that Bob's input $E_B$ is a valid input for every graph Alice may hold in $m^{-1}$ since $E_B$ is constructed adaptively solely based on this common message.
    Therefore, it must be that such an $E_{A'}$ cannot exist.
\end{proof}

We now use this to argue that Bob can (almost always) only produce small outputs, noting that this is regardless of the message sent.

\begin{lemma}\label{lem: single clique bob output}
    Bob's output must be of size $\vert OUT \vert \leq 3\cdot\log n$, with probability at least $1-2/n$ over $\GBA$.
\end{lemma}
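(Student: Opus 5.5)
Since the protocol is deterministic and must be correct on every input, Bob's output $OUT$ has to be a clique in $E_{A'}\cup E_B$ for every $A'\in m^{-1}$: by Lemma \ref{lem: m+E_B = m}, Bob cannot exclude any of these candidates after seeing both $m$ and $E_B$. The plan is to use this to confine $OUT$ to a region where the only available edges come from a single copy of $\GBA$ or of $\overline{\GBA}$, and then apply Proposition \ref{prop: random graph clique} to each of these two graphs. Since each bad event has probability at most $1/n$, a union bound gives the stated $1-2/n$.

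\textbf{Step one.} Condition on $\omega(\GBA)\le 3\log n$ and $\omega(\overline{\GBA})\le 3\log n$. Since $\overline{\GBA}\sim\cG_{n,1/2}$ as well, each of these fails with probability at most $1/n$ by Proposition \ref{prop: random graph clique}.

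\textbf{Step two.} Take the true input $A\in m^{-1}$, which exists by Lemma \ref{lem: A in m^-1}. Every edge of $G$ lies in $E(\GBA[A])\cup E(\overline{\GBA}[A'\setminus\VCO])$, where $A'$ is the selected set. Consider how $OUT$ meets $\VCO$.
\begin{itemize}
\item If $OUT$ contains a vertex $v\in\VCO$, then $v$ has no $E_B$-edges. So every edge of $OUT$ at $v$ is an $E_A$-edge and lies in $\GBA$.
\item More generally, any edge of $OUT$ with an endpoint outside $A'\setminus\VCO$ must come from $E_A$, and hence from $\GBA$.
\end{itemize}
The aim is to show that $OUT$ lies entirely inside one of the two graphs. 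Either all its edges are in $\GBA$, giving $\vert OUT\vert\le 3\log n$; or it lies inside $A'\setminus\VCO$ with all its edges coming from $E_B\subseteq\overline{\GBA}$, giving the same bound.

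\textbf{Step three (the main obstacle).} A mixed clique could use some $\GBA$-edges and some $\overline{\GBA}$-edges within $A\cap(A'\setminus\VCO)$. To rule this out I would use the quantifier over $m^{-1}$. Any pair of vertices of $OUT$ must be adjacent in $E_{A''}\cup E_B$ for \emph{every} $A''\in m^{-1}$. Suppose the pair $\{u,w\}$ is covered only by $\GBA$, so it is not in $E_B$. Then $\{u,w\}\subseteq A''$ must hold for all $A''\in m^{-1}$, which forces $u,w\in\VCO$.

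So the vertices of $OUT$ split into two groups:
\begin{itemize}
\item those in $\VCO$, which are pairwise joined by $\GBA$-edges;
\item those outside $\VCO$, where any pair joined by a $\GBA$-edge is impossible by the previous paragraph, so these must be pairwise joined by $E_B$-edges from $\overline{\GBA}$.
\end{itemize}
Hence $OUT\cap\VCO$ is a clique in $\GBA$ and $OUT\setminus\VCO$ is a clique in $\overline{\GBA}$. Each has size at most $3\log n$, giving $\vert OUT\vert\le 6\log n$.

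Getting the sharper $3\log n$ needs one more observation: a cross edge between $v\in\VCO$ and $w\notin\VCO$ has to be a $\GBA$-edge, but that requires $w\in A''$ for all $A''\in m^{-1}$, contradicting $w\notin\VCO$. So $OUT$ cannot meet both sides unless it is a single vertex, and $\vert OUT\vert\le 3\log n$.

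This edge-by-edge argument is where I expect care is needed. In particular, I would verify that $E_B$ is fixed independently of which $A''$ is the truth, which holds because $E_B$ depends only on $m$ and the selected $A'$.
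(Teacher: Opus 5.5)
Your proposal is correct and matches the paper's argument. Requiring $OUT$ to be a clique in $E_{A''}\cup E_B$ for every $A''\in m^{-1}$ is the paper's $\GCO$; your observation that any pair not in $E_B$ must have both endpoints in $\VCO$ is the containment $E(\GCO)\subseteq E(E_B)\cup E(\GBA[\VCO])$; and ruling out cross pairs is the paper's vertex-disjointness step, which gives $\max\{\omega(\GBA),\omega(\overline{\GBA})\}\le 3\log n$ with probability at least $1-2/n$ by the union bound (your intermediate $6\log n$ bound is not needed).
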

\begin{proof}
    Let $\GCO$ denote the graph consisting precisely of the edges which Bob is certain exist in $G$ given both $m$ and $E_B$.

    Our first step is to argue that Bob's output must be a clique in $\GCO$, taking the edge set of $OUT$ to be the corresponding induced edges.

    \begin{claim}\label{claim: single cq OUT < GCO}
        It must be that $E(OUT) \subseteq E\left(\GCO\right)$.
    \end{claim}
    \begin{proof}
        For a contradiction, suppose that $E(OUT)\not\subseteq E\left(\GCO\right)$, so there exists some edge $e\in E(OUT)\setminus E\left(\GCO\right)$.
        Then, since the protocol is deterministic, this necessitates the existence of an input graph $G$ such that Bob receives the same message $m$, input graph $E_B$, and produces the same output $OUT$, but where $e\notin E(G)$.
        However, this would cause Bob to produce an invalid output (i.e. one that is not a clique) on this input, yielding the desired contradiction.
    \end{proof}

    We now turn to bound the size of the largest cliques in $\GCO$.

    \begin{claim}
        $E\left(\GCO\right) \subseteq E(E_B) \cup E(\GBA[\VCO])$.
    \end{claim}
    \begin{proof}
        It is clear that Bob is certain every edge in $E(E_B)$ exists.

        We proceed by considering all remaining edges which Bob is certain exist.
        Each of these edges must be contained in $E_A$ (since they are not in $E_B$), and so we turn to consider which edges from $E_A$ Bob is certain exist.
        By Lemma \ref{lem: m+E_B = m}, we can approach this solely by considering the edges which Bob is sure exist given only the message $m$.

        By construction, every edge $e\in E(E_A)$ is of the form $e\in A\times A$, and so Bob can only be certain an edge exists if both of its endpoints are guaranteed to be elements of $A$.
        This is precisely the definition of $\VCO$, giving us that every such edge must be of the form $\VCO\times\VCO$.
        Then, since $E(E_A)\subseteq E(\GBA)$, and by the definition of induced subgraphs, this is equivalent to every such edge being an element of $E(\GBA[\VCO])$.
    \end{proof}

    We now use this to reason about the largest cliques in $\GCO$.

    \begin{claim}\label{claim: single cq omega(GCO)}
        $\omega(\GCO) \leq 3\cdot\log n$ with probability at least $1-2/n$ over $\GBA$.
    \end{claim}
    \begin{proof}
        We first see that $E(E_B)$ and $E(\GBA[\VCO])$ are vertex-disjoint sets of edges since every edge $e\in E(E_B)$ is only incident to the vertex set $A'\setminus\VCO$, whereas every edge in $E(\GBA[\VCO])$ is only incident to the vertex set $\VCO$.
        Therefore, $\GCO$ can be seen as the union of two vertex disjoint graphs and so
        \begin{align*}
            \omega(\GCO) &\leq \max\{\omega(E_B), \omega(\GBA[\VCO])\}\\
                         &\leq \max\{\omega(\overline{\GBA}), \omega(\GBA)\}\tag{$E(E_B)\subseteq E(\overline{\GBA})$ and $E(\GBA[\VCO])\subseteq E(\GBA)$.}\\
                         &\leq 3\cdot\log n,
        \end{align*}
        where the final inequality holds with probability at least $1-2/n$ over $\GBA$ by combining Proposition \ref{prop: random graph clique} with a simple union bound over $\GBA$ and $\overline{\GBA}$ since both graphs are Erd\H{o}s-R\'{e}nyi random graphs.
    \end{proof}

    Finally, combining Claims \ref{claim: single cq OUT < GCO} and \ref{claim: single cq omega(GCO)} completes the proof.
\end{proof}

\subsubsection{Bringing it all Together}

We are now ready to complete our proof for a single gadget.

\SINGLECLIQUE*

\begin{proof}
    We prove the contrapositive; consider a protocol, $\pi$, which only sends messages of size $o(n)$ bits (i.e. has communication cost $CC(\pi) = o(n)$), and suppose that Alice and Bob's inputs are constructed via Input Distribution \ref{input: single clique}.
    We will use the probabilistic method to show there exists an input on which the protocol $\pi$ fails to produce a suitable approximation.

    Since $CC(\pi) = o(n)$, by Lemma \ref{lem: single clique average message cJ}, 
    \[
        \Pr_{A,\GBA}(\cJ(A \mid M = m(E_A)) = o(n)) \geq 1 - o(1),
    \]
    and we proceed by conditioning on this event occurring, so, by Corollary \ref{cor: J vs VCO},
    \[
        \vert \VCO \vert = o(n).
    \]

    We now consider the input case for Bob where $A' = A$.
    This is valid for us to consider since $A\in m^{-1}$ by Lemma \ref{lem: A in m^-1}, and $A'$ is uniformly randomly chosen from $m^{-1}$ by construction.
    Thus, $\Pr(A' = A) = 1/\vert m^{-1} \vert > 0$, and so this input is possible via the probabilistic method.
    Then, when this event occurs, by Lemma \ref{lem: clique when A'=A} we get that (for sufficiently large $n$)
    \[
        \omega(G) \geq n/2 - \vert \VCO \vert = n/2 - o(n) \geq n/4.
    \]
    However, by Lemma \ref{lem: single clique bob output} it must also be the case that
    \[
        \vert OUT \vert \leq 3\cdot\log n
    \]
    with probability at least $1-2/n = 1 - o(1)$ over $\GBA$ (for sufficiently large $n$).

    Thus, via a union bound, over the randomness of $A$ and $\GBA$, it must be that simultaneously $\cJ(A \mid M = m) = o(n)$, $\omega(G) \geq n/4$, and $\vert OUT \vert \leq 3\cdot\log n$ with probability at least
    \[
        1 - o(1) - o(1) = 1 - o(1) > 0.\tag{For sufficiently large $n$.}
    \]
    Therefore, by the probabilistic method, there must exist some $A$ and $\GBA$ such that all three of these conditions are met.
    Then, precisely when this is the graph sampled during the input distribution, it must be that the protocol can only output (at best) a
    \[
        \frac{n}{4}\cdot\frac{1}{3\cdot\log n} = \left(\frac{n}{12\cdot\log n}\right)\text{-approximation},
    \]
    completing the argument.
\end{proof}

This approach can trivially be extended to yield an initial bound on $\beta$-approximations, and this will be useful later on when combining multiple gadgets.

\begin{corollary}\label{cor: single beta clique}
    For any $\beta < n/12$, any deterministic protocol which computes a better-than $(\beta/\log n)$-approximate \MCQ must send a message of size $\Omega(\beta)$.
\end{corollary}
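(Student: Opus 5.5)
The plan is to embed the single-gadget construction of Lemma \ref{lem: single CQ} on a vertex subset of size $N = 12\beta$ (or a constant multiple of $\beta$ large enough for the asymptotics), and leave the remaining $n - N$ vertices isolated. Concretely, fix $W\subseteq V$ with $\vert W\vert = N$ and run Input Distribution \ref{input: single clique} with $V$ replaced by $W$: sample $\GBA\sim\cG_{N,1/2}$ on $W$, give Alice $E_A = (V, E(\GBA[A]))$ for a uniformly random $A\subseteq W$ of size $N/2$, and construct Bob's input adaptively from the message exactly as before. Vertices outside $W$ are isolated in both inputs, so they do not affect $\omega(G)$, and any clique Bob outputs with more than one vertex lies inside $W$.

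The key steps follow the single-gadget argument with $n$ replaced by $N$. First, Lemma \ref{lem: cJ >= VCO} gives $\cJ(A\mid M=m)\geq \vert\VCO\vert - \log(N+1)$. Second, Lemma \ref{lem: single clique average message cJ} shows that if every message has size $o(N) = o(\beta)$, then with probability $1-o(1)$ the message-wise information is $o(N)$, and hence $\vert\VCO\vert = o(N)$. Third, conditioned on $A' = A$, Lemma \ref{lem: clique when A'=A} gives $\omega(G)\geq N/2 - o(N)\geq N/4 = 3\beta$. Fourth, the argument of Lemma \ref{lem: single clique bob output}, run on $W$, shows that Bob's output must be a clique in $\GCO$. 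It is therefore bounded by $\max\{\omega(\GBA),\omega(\overline{\GBA})\}\leq 3\log N\leq 3\log n$ with probability $1-2/N$. Combining these gives an approximation ratio of at least $3\beta/(3\log n) = \beta/\log n$. By the probabilistic method, some fixed input forces the protocol to do no better than a $(\beta/\log n)$-approximation, contradicting the assumption. Hence some message must have size $\Omega(N) = \Omega(\beta)$.

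The main obstacle is the asymptotics. The "$o(\cdot)$" statements and the probability bounds $1-2/N$ and $1-o(1)$ require $N$ to be large, whereas $\beta$ may be small, even constant. For $\beta$ bounded by a constant the claim $\Omega(\beta)=\Omega(1)$ is trivial: any nontrivial protocol must send at least one bit, or Bob can simply output a single vertex. So I would restrict to $\beta\geq\beta_0$ for a suitable constant $\beta_0$. I would also replace the $o(\cdot)$ reasoning with an explicit constant. Assume all messages have at most $cN$ bits. By Markov's inequality applied via Proposition \ref{prop: I = exp J}, $\cJ(A\mid M=m)\leq N/8$ with probability at least $1-8c$, so $\vert\VCO\vert\leq N/8+\log(N+1)\leq N/4$. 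Choosing $c$ small and $N$ large makes the failure probabilities sum to less than $1$.

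A secondary point is checking that the ratio comes out as claimed: we need $N/4\geq 3\beta$ and $\log N\leq\log n$. The latter holds since $N\leq n$, which is guaranteed by $\beta<n/12$.
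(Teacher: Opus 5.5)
Your proposal is correct and is essentially the paper's proof: restrict the single-gadget construction of Lemma \ref{lem: single CQ} to $12\beta$ vertices, so that when $A'=A$ you get $\omega(G)\geq 3\beta$, while Bob can output at most $3\log(12\beta)\leq 3\log n$ vertices. Your explicit-constant compression step via Markov (valid because $\cJ(A\mid M=m)\geq 0$ for uniform $A$) and your separate treatment of bounded $\beta$ make rigorous what the paper's $o(\cdot)$ phrasing leaves implicit, namely that the argument needs $\beta$ to be large.
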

\begin{proof}
    The same approach still holds, by first arbitrarily selecting $12\cdot\beta$ many vertices, and then proceeding within this restricted induced subgraph.

    Thus, by the same arguments, any protocol which only sends messages of size $o(12\cdot\beta) = o(\beta)$ can only output at most a
    \[
        \frac{12\cdot\beta}{12\cdot\log(12\cdot\beta)} \geq \left(\frac{\beta}{\log n}\right)\text{-approximation.}
    \]
\end{proof}

\subsection{Step Two: Combining Multiple Gadgets}\label{sec: cq multi gadget}

We now show how to combine $\Theta(n^2/\beta^2)$ many of these gadgets into a single graph to obtain our final $\Omega\left(\frac{n^2}{\beta\cdot\log n}\right)$ lower bound.

\subsubsection{Packing Edge-Disjoint Cliques into \texorpdfstring{$G$}{G}}

The key tool for combining multiple gadgets is the following previously known combinatorial lemma, used by \cite{hssw12}.

\begin{lemma}\cite{hssw12}\label{lem: beta covering}
    Let $V = [n]$, and let $\beta < n/18$ be an integer.
    Then, there exist subsets $B_1,\dots,B_{\ell}\subseteq V$ such that all of the following hold:
    \begin{itemize}
        \item For each $i$, $\vert B_i \vert = 18\cdot\beta$,
        \item For each $i\neq j$, $\vert B_i \cap B_j \vert \leq 1$,
        \item $\ell = \Theta\left(n^2/\beta^2\right)$.
    \end{itemize}
\end{lemma}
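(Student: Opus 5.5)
The plan is an explicit algebraic construction: take the $B_i$ to be the graphs of affine functions over a prime field, laid out on a grid of vertices. Two distinct lines meet in at most one point, which gives the intersection condition for free.

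Write $k = 18\cdot\beta$. By Bertrand's postulate, choose a prime $q$ with $\lfloor n/k\rfloor/2 < q \le \lfloor n/k\rfloor$. Identify a subset of $V=[n]$ of size $q\cdot k\le n$ with the grid $\{0,\dots,k-1\}\times\mathbb{F}_q$; the leftover vertices are simply unused. For each pair $(a,b)\in\mathbb{F}_q^2$, define
\[
    B_{a,b} = \big\{(x,\ a x + b \bmod q) : x\in\{0,\dots,k-1\}\big\}.
\]
This gives $\ell = q^2$ sets, which is $\Theta(n^2/k^2)=\Theta(n^2/\beta^2)$ because $q=\Theta(n/k)$.

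I would then check the three properties in order.
\begin{itemize}
    \item \emph{Size.} Each $B_{a,b}$ has exactly $k$ elements, since its points have distinct first coordinates $x$.
    \item \emph{Intersection.} Suppose $(a,b)\ne(a',b')$ and both sets contain $(x,y)$ and $(x',y')$ with $x\ne x'$. Then $(a-a')(x-x')\equiv 0 \pmod q$. This forces $a=a'$ provided $x\not\equiv x' \pmod q$, and then $b=b'$, a contradiction. Two points with the same $x$ lie in different sets only if they coincide. Hence $\vert B_{a,b}\cap B_{a',b'}\vert\le 1$.
    \item \emph{Count.} This is the computation $\ell=q^2=\Theta(n^2/\beta^2)$ above.
\end{itemize}

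The main obstacle is the requirement $x\not\equiv x' \pmod q$ in the intersection step, i.e.\ we need $k\le q\approx n/k$, so $k\lesssim\sqrt n$. For $\beta\le c\sqrt n$ the argument above goes through directly. When $k>\sqrt n$ I would not expect any construction to work. Counting pairs inside each $B_i$, together with the second-moment (Johnson-type) bound $\sum_v d_v^2 \le \ell k + \ell(\ell-1)$ where $d_v$ is the number of sets containing $v$, gives $\ell\le O(n/k)$ once $k^2\gg n$. That is smaller than $n^2/k^2$ in the intermediate range $\sqrt n\ll k\ll n$.

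So I would first verify the lemma for $\beta\le c\sqrt n$ by the construction above. I would then check whether the downstream argument actually needs the lemma for larger $\beta$. For those $\beta$ it may suffice to use $\Theta(n/\beta)$ disjoint blocks, or a single gadget as in Corollary \ref{cor: single beta clique}, and verify that the final bound $\Omega\big(n^2/(\beta\log n)\big)$ is still attained there. Failing that, the lemma's range of $\beta$ should be restricted to $\beta=O(\sqrt n)$.
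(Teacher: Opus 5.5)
Your proposal is correct. It also goes further than the paper, which gives no proof of this lemma and simply imports it from \cite{hssw12}.

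Your affine-line construction on $\{0,\dots,k-1\}\times\mathbb{F}_q$ is a valid self-contained proof in the range where it applies.
\begin{itemize}
\item Distinct first coordinates give $|B_{a,b}|=k$.
\item The relation $(a-a')(x-x')\equiv 0 \pmod q$ forces $(a,b)=(a',b')$ once $k\le q$.
\item Bertrand gives $q=\Theta(n/k)$, so $\ell=q^2=\Theta(n^2/\beta^2)$.
\end{itemize}
Concretely, $18\beta\le\sqrt{n/2}$ gives $\lfloor n/k\rfloor\ge 2k$, and hence $q\ge k$.

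Your counting argument is also right, and it is the more important point. From $\sum_v d_v(d_v-1)\le\ell(\ell-1)$, $\sum_v d_v=\ell k$ and Cauchy--Schwarz you get $\ell(k^2/n-1)\le k-1$. Hence $\ell<2n/k$ whenever $k^2\ge 2n$. For instance, $\beta\approx n^{3/4}$ forces $\ell=O(n^{1/4})$ rather than $\Theta(n^{1/2})$. So the lemma as stated for all integers $\beta<n/18$ is false, and no construction can repair it. Replacing the hypothesis by $\beta=O(\sqrt n)$ is the correct fix.

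Your suggested fallback does not rescue the downstream bound, so the restriction genuinely propagates. After the paper's rescaling in the proof of Theorem~\ref{thm: CQ}, the lemma is applied with gadget parameter $\beta\log n$, where $\beta$ is the target approximation ratio. The averaging over gadgets only certifies $\Omega(\ell\cdot\beta\log n)$ bits of communication.
\begin{itemize}
\item With $\Theta(n/(\beta\log n))$ vertex-disjoint blocks this is $O(n)$.
\item By your own Johnson bound, the same $O(n)$ cap holds for any packing of such sets once $\beta\log n\gg\sqrt n$.
\item The single gadget of Lemma~\ref{lem: single CQ} also yields only $\Omega(n)$.
\end{itemize}
This falls short of the claimed $\Omega\left(n^2/(\beta\log n)\right)$ by a factor of $n/(\beta\log n)$. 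So in the range $\sqrt n/\log n\ll\beta\ll n/\log n$, Theorems~\ref{thm: CQ} and~\ref{thm: IS} are not established by the paper's argument; only $\Omega(n)$ follows there. Covering that range would need a genuinely new idea rather than a different choice of packing. For example, one could allow larger but controlled overlaps between the $B_i$, but then the analysis of which edges Bob is certain of would have to be redone.
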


This precisely informs how we will combine multiple independent instances of our earlier gadget: each vertex subset $B_i$ will contain an independent instance of size $18\cdot\beta$.

\subsubsection{The Hard Communication Game}

\begin{tcolorbox}[standard jigsaw,opacityback=0,width=0.99\textwidth,breakable]\label{input: multi clique}
    {\large{\underline{\textbf{Input Distribution \ref{input: multi clique}: Multi-Gadget Clique Communication Game:}}}}
    \vspace{1em}

    \textbf{Initialisation:} Sample $\GBA\sim\cG_{n,1/2}$.

    \vspace{1em}

    \textbf{Alice's Input Distribution:} Let $V = [n]$.

    \begin{enumerate}
        \item Let $\cB = \{B_1,\dots,B_{\ell}\}$ denote the subsets of size $18\cdot\beta$ from Lemma \ref{lem: beta covering}.
        \item For each $i\in[\ell]$:
            \begin{enumerate}
                \item Randomly sample $A_i\subseteq B_i$ such that $\vert A_i \vert = 9\cdot\beta$.
                \item Let $E_A(i) = E\left(G_{\text{Base}}[A_i]\right)$.
            \end{enumerate}
        \item Return $E_A = (V,\bigcup_i E_A(i))$.
    \end{enumerate}

    \vspace{1em}

    \textbf{The Compression (Message) Step:}
    Again, let $m$ denote the (fixed) message sent by Alice to Bob, and let $m^{-1}$ denote the pre-image of $m$.
    For each $i\in[\ell]$, let $m^{-1}\vert_i$ denote the pre-image of $m$ restricted to $B_i$, i.e.
    \[
        m^{-1}\vert_i = \{ A'_i : \left(A'_1,\dots,A'_{\ell}\right)\in m^{-1} \},
    \]
    and
    \[
        \VCO\vert_i = \big\{ v\in V : \text{ $v\in A'$ for all $A'\in m^{-1}\vert_i$}\big\}.
    \]

    \vspace{1em}

    \textbf{Bob's Input Distribution:}

    \begin{enumerate}
        \item Let $\min = \argmin_j \left\{ \cJ\left(A_j \mid M = m\right) \right\}$.
        \item Randomly select $A_{\min}'\in m^{-1}\vert_{\min}$.
        \item Return $E_B = (V,E(\overline{G_{\text{Base}}}[A'_{\min}\setminus\VCO\vert_{\min}]))$.
   \end{enumerate}
\end{tcolorbox}

The hardness of this game arises from a similar compression argument as before - if only messages of size $o\left(n^2/\beta\right)$ are sent, then for most messages it must be the case that $\VCO\vert_{\min}$ is of size $o(\beta)$.
Then, in this case, the same arguments as those used to prove Lemma \ref{lem: single CQ} can be shown to apply, yielding the full argument.

Before proceeding with the formal arguments it will be notationally useful to denote the concatenation
\[
    A = (A_1,\dots,A_{\ell}).
\]

\subsubsection{The Multi-Gadget Compression Step}\label{sec: cq multi compression}

Again, we begin by first reasoning about the compression of the message.

These arguments follow almost identically to the previous single-gadget arguments of Section \ref{sec: single cq compression}, except that we now consider $A_{\min}$ instead of $A$ and $\VCO\vert_{\min}$ instead of $\VCO$.
Hence, the proofs are not repeated here, except for Lemma \ref{lem: multi cq CJ A_min o(beta)} which does not have an earlier counterpart.

\begin{lemma}\label{lem: multi gadget average message cJ}
    If the protocol only sends messages of size $o(n^2/\beta)$, then 
    \[
        \Pr_{A,\GBA}(\cJ(A \mid M = m) = o(n^2/\beta)) \geq 1-o(1).
    \]
\end{lemma}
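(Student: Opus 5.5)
The plan is to mirror the proof of Lemma \ref{lem: single clique average message cJ}, replacing the scale $n$ by $n^2/\beta$ and $A$ by the concatenation $A = (A_1,\dots,A_\ell)$. There are three steps: bound the average of $\cJ$ over messages, interpret the ``$o(\cdot)$'' event through a threshold, and apply Markov's inequality.

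\textbf{Step 1: the average is small.} Let $\pi$ be a protocol with $CC(\pi) = o(n^2/\beta)$. Given $\GBA$, Alice's input $E_A$ is a function of $A$ and $\GBA$, and $M$ is a deterministic function of $E_A$. Hence $M$ depends on $A$ only through $E_A$, and Fact \ref{fact: data processing inequality} gives $\cI_{A,\GBA}(A;M) \leq \cI_{A,\GBA}(E_A;M)$. Fact \ref{fact: IC vs CC} then gives
\[
    \cI_{A,\GBA}(A;M) \leq CC(\pi) = o(n^2/\beta).
\]
By Proposition \ref{prop: I = exp J}, this mutual information equals $\bE_{A,\GBA}\left[\cJ(A\mid M = m(E_A))\right]$. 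So the expected message-wise information is $o(n^2/\beta)$.

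\textbf{Step 2: fix the meaning of ``$o(n^2/\beta)$''.} Write $CC(\pi) = \varepsilon_n \cdot n^2/\beta$ with $\varepsilon_n \to 0$. The event to control is
\[
    \cJ(A\mid M=m) \leq \sqrt{\varepsilon_n}\cdot n^2/\beta,
\]
and its threshold is itself $o(n^2/\beta)$.

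\textbf{Step 3: Markov's inequality.} Markov's inequality is valid here once $\cJ \geq 0$ is known. This holds because $A$ is a product of uniform distributions, so it is uniform on its support. By Fact \ref{fact: entropy facts}, $H(A) = \log|\supp(A)| \geq H(A\mid M=m)$, since the conditional distribution is supported on a subset of that support. Markov then gives
\[
    \Pr_{A,\GBA}\left(\cJ(A\mid M=m) > \sqrt{\varepsilon_n}\, n^2/\beta\right) \leq \frac{\varepsilon_n n^2/\beta}{\sqrt{\varepsilon_n}\, n^2/\beta} = \sqrt{\varepsilon_n} = o(1).
\]
This is the claim, and it is cleaner than the contradiction phrasing of Lemma \ref{lem: single clique average message cJ}.

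\textbf{Main obstacle.} The subtle point is the expectation in Step 1. The distribution is over both $A$ and $\GBA$, while $m^{-1}$ and $\cJ$ are defined with $\GBA$ fixed. I would therefore run the whole argument conditioned on each fixed $\GBA$, where $M$ is a deterministic function of $A$ alone. Markov's bound of $\sqrt{\varepsilon_n}$ holds uniformly for every fixed $\GBA$, and averaging over $\GBA$ preserves it.
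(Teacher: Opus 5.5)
Your proposal is correct and follows essentially the paper's route. The paper, by reference to Lemma \ref{lem: single clique average message cJ}, bounds $\cI(A;M)$ via data processing and Fact \ref{fact: IC vs CC}, rewrites it as $\bE[\cJ]$ via Proposition \ref{prop: I = exp J}, and then runs what is really Markov's inequality phrased as a contradiction. Your explicit threshold and Markov step tighten this, and you also supply two points the paper leaves implicit: that $\cJ \geq 0$ because $A$ is uniform on its support, and that the argument should be run with $\GBA$ fixed so that $\cJ$ matches the fixed-$\GBA$ quantity used later for $\VCO$.
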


\begin{lemma}\label{lem: multi cq CJ A_min o(beta)}
    If $\cJ(A \mid M = m) = o(n^2/\beta)$, then $\cJ(A_{\min} \mid M = m) = o(\beta)$.
\end{lemma}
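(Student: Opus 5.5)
The plan is to combine Proposition \ref{prop: cJ of product} with an averaging argument over the $\ell = \Theta(n^2/\beta^2)$ gadgets. By construction of Input Distribution \ref{input: multi clique}, the sets $A_1,\dots,A_\ell$ are sampled independently (each uniformly from the $(9\beta)$-subsets of its own $B_i$). Hence $A = (A_1,\dots,A_\ell)$ follows a product distribution. Applying Proposition \ref{prop: cJ of product} with $X = A$ and $Y = M$, evaluated at the fixed message $m$, gives
\[
    \cJ(A \mid M = m) \geq \sum_{i=1}^{\ell} \cJ(A_i \mid M = m).
\]

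Each term $\cJ(A_i \mid M = m) = H(A_i) - H(A_i \mid M = m)$ is nonnegative. This holds because the conditional distribution of $A_i$ given $M=m$ is supported on a subset of the support of $A_i$, and $A_i$ is uniform there; so by Fact \ref{fact: entropy facts}, $H(A_i \mid M = m) \leq \log\vert\supp(A_i)\vert = H(A_i)$. I will state this nonnegativity explicitly, since without it the minimum term could not be bounded by the average. Then the minimum is at most the average:
\[
    \cJ(A_{\min} \mid M = m) = \min_j \cJ(A_j \mid M = m) \leq \frac{1}{\ell}\sum_{i=1}^{\ell}\cJ(A_i \mid M = m) \leq \frac{\cJ(A \mid M = m)}{\ell}.
\]

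Finally, substituting the hypothesis $\cJ(A\mid M=m) = o(n^2/\beta)$ and $\ell = \Theta(n^2/\beta^2)$ from Lemma \ref{lem: beta covering} gives $\cJ(A_{\min}\mid M=m) = o(n^2/\beta)\cdot O(\beta^2/n^2) = o(\beta)$.

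The main (mild) obstacle is making the asymptotic notation honest. The $o(\cdot)$ in the hypothesis is really a bound $\cJ(A\mid M=m) \leq \varepsilon(n)\cdot n^2/\beta$ with $\varepsilon(n)\to 0$. The lower bound $\ell \geq c\,n^2/\beta^2$, with $c$ an absolute constant from Lemma \ref{lem: beta covering}, then turns this into $\cJ(A_{\min}\mid M=m) \leq (\varepsilon(n)/c)\cdot\beta$, with no dependence on $m$ beyond the hypothesis. A secondary check is that $\min$ is chosen as the argmin of exactly these per-gadget quantities $\cJ(A_j \mid M = m)$, which is how Bob's input distribution defines it, so no further care is needed about ties.
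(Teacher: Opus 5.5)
Your proof is correct and is essentially the paper's argument: apply Proposition~\ref{prop: cJ of product} to the product distribution of $(A_1,\dots,A_\ell)$, bound the sum below by $\ell\cdot\cJ(A_{\min}\mid M=m)$ (equivalently, the minimum is at most the average), and divide by $\ell=\Theta(n^2/\beta^2)$. One small correction: the minimum is at most the average for arbitrary reals, so nonnegativity is not needed for that step, though it does make the conclusion a genuine two-sided $o(\beta)$ bound.
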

\begin{proof}
    \begin{align*}
        \cJ(A \mid M = m) &= \cJ(A_1,\dots,A_{\ell} \mid M = m)\\
                          &\geq \sum_{i=1}^{\ell} \cJ(A_i \mid M = m)\tag{Proposition \ref{prop: cJ of product}.}\\
                          &\geq \sum_{i=1}^{\ell} \cJ(A_{\text{min}} \mid M = m)\tag{Definition of $A_{\text{min}}$.}\\
                          &= \ell\cdot\cJ(A_{\text{min}} \mid M = m)\\
                          &= \Theta(n^2/\beta^2)\cdot\cJ(A_{\text{min}} \mid M = m),\tag{Lemma \ref{lem: beta covering}.}
    \end{align*}
    or equivalently,
    \[
        \cJ(A_{\text{min}} \mid M = m) \leq \cJ(A \mid M = m) / \Theta(n^2/\beta^2).
    \]
    Now, plugging in $\cJ(A \mid M = m) = o(n^2/\beta)$ completes the argument.
\end{proof}

\begin{lemma}\label{lem: cJ(A_i) >= VCO_i}
    If $\cJ(A_{\min} \mid M = m) = o(\beta)$, then $\vert \VCO\vert_{\min} \vert = o(\beta)$.
\end{lemma}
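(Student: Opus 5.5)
We mirror the proof of Lemma \ref{lem: cJ >= VCO}, now restricted to the single block $B_{\min}$. The target is the quantitative inequality
\[
    \cJ(A_{\min} \mid M = m) \;\geq\; \vert\VCO\vert_{\min}\vert - \log(18\cdot\beta+1).
\]
Given this, the hypothesis $\cJ(A_{\min}\mid M=m) = o(\beta)$ yields $\vert\VCO\vert_{\min}\vert \leq o(\beta) + O(\log\beta)$. To conclude $o(\beta)$ we need $\log\beta = o(\beta)$, which holds once $\beta$ is large (say $\beta = \omega(1)$). In the regime of bounded $\beta$ the statement is vacuous up to constants, so we treat it asymptotically as in Corollary \ref{cor: J vs VCO}.

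\textbf{Step 1: bound $H(A_{\min})$ from below.} By construction, $A_{\min}$ is uniform over the $9\beta$-subsets of $B_{\min}$, a ground set of size $18\beta$. We first note that the choice of index $\min$ depends on $m$, but once $m$ is fixed, $\min$ is a fixed index. The marginal $A_{\min}$ is therefore uniform, and Facts \ref{fact: entropy facts} and \ref{fact: binom entropy approx} give
\[
    H(A_{\min}) = \log\binom{18\beta}{9\beta} \geq 18\beta - \log(18\beta+1),
\]
exactly as in Claim \ref{claim: H(A)}.

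\textbf{Step 2: bound $H(A_{\min}\mid M=m)$ from above.} Conditioned on $M=m$, the support of $A_{\min}$ is contained in $m^{-1}\vert_{\min}$. Every element of this set contains $\VCO\vert_{\min}$, which is a subset of $B_{\min}$ since each $A'_{\min}\subseteq B_{\min}$. So each element is determined by choosing the remaining $9\beta - \vert\VCO\vert_{\min}\vert$ vertices from $B_{\min}\setminus\VCO\vert_{\min}$, and the support has size at most
\[
    \binom{18\beta - \vert\VCO\vert_{\min}\vert}{9\beta - \vert\VCO\vert_{\min}\vert} \leq 2^{18\beta - \vert\VCO\vert_{\min}\vert}.
\]
Hence $H(A_{\min}\mid M=m)\le 18\beta - \vert\VCO\vert_{\min}\vert$, as in Claim \ref{claim: H(A|M)}. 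Subtracting this from the bound in Step 1 gives the target inequality.

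The only real subtlety is that $\min$ is chosen adaptively from $m$. Since $\cJ(A_j\mid M=m)$ is defined using the unconditional entropy $H(A_j)$ and the conditional entropy given the single event $M=m$, fixing $m$ fixes $j=\min$, and the computation above is carried out for that fixed $j$. No averaging over $m$ is involved, so the adaptivity causes no difficulty. The remaining technical point is absorbing the additive $\log(18\beta+1)$ term into $o(\beta)$, which uses $\beta\to\infty$.
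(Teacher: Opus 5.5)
Your proposal is correct and follows the paper's route: the paper proves this lemma by transplanting Lemma \ref{lem: cJ >= VCO} and Corollary \ref{cor: J vs VCO} to the block $B_{\min}$, with the same uniform-entropy lower bound on $H(A_{\min})$ and the same support-counting upper bound on $H(A_{\min}\mid M=m)$, and, like you, it implicitly needs $\beta\to\infty$ to absorb the additive $\log(18\cdot\beta+1)$ term. That caveat can in fact be dropped: with $c=\vert\VCO\vert_{\min}\vert\le 9\beta$, compare the binomials directly rather than via Fact \ref{fact: binom entropy approx}, since $\log\binom{18\beta}{9\beta}-\log\binom{18\beta-c}{9\beta-c}=\sum_{i=0}^{c-1}\log\frac{18\beta-i}{9\beta-i}\ge c$ (each factor is at least $2$), so $\cJ(A_{\min}\mid M=m)\ge\vert\VCO\vert_{\min}\vert$ with no additive loss, for every $\beta$.
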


\begin{lemma}
    If $\cJ(A_{\min} \mid M = m) = o(\beta)$, then $\vert m^{-1}\vert_{\min} \vert > 1$.
\end{lemma}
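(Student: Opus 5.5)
The plan is to argue by contrapositive, exactly as in Lemma \ref{lem: m^-1 > 1}, now applied to the single coordinate $A_{\min}$. Suppose $\vert m^{-1}\vert_{\min}\vert = 1$. We will show that $\cJ(A_{\min}\mid M=m)$ is then linear in $\beta$, which contradicts the hypothesis $\cJ(A_{\min}\mid M=m)=o(\beta)$.

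First, we observe that $A_{\min}$ is determined by $m$. The conditional distribution of $A_{\min}$ given $M=m$ is supported on the projection $m^{-1}\vert_{\min}$. This is because every tuple $(A_1,\dots,A_\ell)$ consistent with $m$ has its $\min$-th coordinate in this set. Also, $A_{\min}$ itself lies in it, by the analogue of Lemma \ref{lem: A in m^-1}. So if the projection is a singleton, then $H(A_{\min}\mid M=m)=0$ by the first item of Fact \ref{fact: entropy facts}. One subtlety to check is that the index $\min$ is a deterministic function of $m$, since it is an $\argmin$ of quantities depending only on $m$. Hence conditioning on $M=m$ genuinely fixes which coordinate we are looking at.

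Second, we lower bound the prior entropy. Since $A_{\min}$ is a uniformly random subset of size $9\beta$ of the $18\beta$-element set $B_{\min}$, Claim \ref{claim: H(A)}, applied with $n$ replaced by $18\beta$, gives
\[
H(A_{\min}) = \log\binom{18\beta}{9\beta} \geq 18\beta - \log(18\beta+1).
\]
This uses Fact \ref{fact: binom entropy approx}, which requires $\beta$ sufficiently large. For bounded $\beta$ the statement $o(\beta)$ is vacuous or trivially handled, so we may assume $\beta\to\infty$ along with $n$. Therefore
\[
\cJ(A_{\min}\mid M=m) = H(A_{\min}) \geq 18\beta - \log(18\beta+1) = \Omega(\beta),
\]
which is not $o(\beta)$. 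This establishes the contrapositive.

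There is no serious obstacle. The only step needing care is the asymptotic interpretation: the whole argument is phrased for $\beta$ growing, with $o(\cdot)$ taken as $n\to\infty$. We should also make clear that $\min$ is well-defined given $m$, so that the single-coordinate entropy bound above applies to a fixed index.
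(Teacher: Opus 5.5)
Your proposal is correct and is the argument the paper intends: the paper omits this proof as the direct analogue of the single-gadget lemma showing $\vert m^{-1}\vert>1$, arguing by contrapositive that a singleton projection forces $H(A_{\min}\mid M=m)=0$, so $\cJ(A_{\min}\mid M=m)=H(A_{\min})=\log\binom{18\beta}{9\beta}=\Omega(\beta)$. Your check that $\min$ is a function of $m$ is appropriate, and the bounded-$\beta$ caveat is unnecessary, since $\binom{18\beta}{9\beta}\geq 2^{9\beta}$ for every $\beta\geq 1$.
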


\begin{lemma}\label{lem: A_i in m^-1_i}
   For any message sent, $A_{\min}\in m^{-1}\vert_{\min}$.
\end{lemma}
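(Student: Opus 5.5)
The claim follows the single-gadget Lemma \ref{lem: A in m^-1}, with one extra layer of bookkeeping: we pass from the full pre-image $m^{-1}$ to its restriction to the coordinate $\min$. We first show that the full tuple $A=(A_1,\dots,A_\ell)$ lies in $m^{-1}$. Then we project onto the coordinate $\min$ and use the definition of $m^{-1}\vert_{\min}$.

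First, Alice's input $E_A=(V,\bigcup_i E_A(i))$ is a deterministic function of $\GBA$ and the tuple $A$. The protocol is deterministic, so the message is $m = m(E_A)$. By definition, $m^{-1}$ consists of all inputs of Alice that map to $m$, so Alice's actual input, identified with the tuple $A$ as in the single-gadget analysis, belongs to $m^{-1}$. This is exactly the argument of Lemma \ref{lem: A in m^-1}.

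Second, the index $\min$ is determined by $m$ alone, since it is an $\argmin$ of the quantities $\cJ(A_j\mid M=m)$, each of which depends only on $m$ and the input distribution. So $\min$ is a fixed index once $m$ is fixed; break ties by any fixed rule. Recall
\[
    m^{-1}\vert_{\min} = \{ A'_{\min} : (A'_1,\dots,A'_\ell)\in m^{-1}\}.
\]
Taking the witness tuple $(A'_1,\dots,A'_\ell)=(A_1,\dots,A_\ell)\in m^{-1}$ shows that $A_{\min}\in m^{-1}\vert_{\min}$.

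The only point needing care is identifying Alice's input with the tuple $A$ rather than with the edge set $E_A$. With $\GBA$ fixed, distinct tuples could in principle give the same edge set, for example when some $A_i$ induces no edges of $\GBA$. We therefore read $m^{-1}$ as a set of tuples $A'$ whose induced input $E_{A'}$ maps to $m$, as the paper's own notation $(A'_1,\dots,A'_\ell)\in m^{-1}$ indicates. Under this reading the true $A$ is trivially included, and no obstacle beyond this bookkeeping is expected.
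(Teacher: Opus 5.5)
Your proposal is correct and matches the paper's argument, which simply defers to the single-gadget Lemma~\ref{lem: A in m^-1}: the message is a deterministic function of Alice's input, so the true tuple $A$ lies in $m^{-1}$, and projecting onto coordinate $\min$ gives $A_{\min}\in m^{-1}\vert_{\min}$. Your observation that $\min$ depends only on $m$ is harmless but not needed, since the same projection shows $A_i\in m^{-1}\vert_i$ for every $i$.
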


\subsubsection{Understanding \texorpdfstring{$\omega(G)$}{omega(G)}}

We now, again, turn to understanding the constructed input graph $G$ itself.
We do not repeat the proof as it is still virtually identical to that of Lemma \ref{lem: clique when A'=A}.

\begin{lemma}\label{lem: multi-instance clique when A'=A}
    If $A'_{\min} = A_{\min}$, then $\omega(G)\geq \frac{18\cdot\beta}{2} - \vert \VCO\vert_{\min} \vert = 9\cdot\beta - \vert \VCO\vert_{\min} \vert$.
\end{lemma}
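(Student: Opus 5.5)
The plan is to mirror the proof of Lemma \ref{lem: clique when A'=A}, restricted to the single block $B_{\min}$, and exhibit the set $A_{\min}\setminus\VCO\vert_{\min}$ as a clique in $G$.

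First I would check that the relevant edges of Alice are present. By construction $E_A \supseteq E_A(\min) = E(\GBA[A_{\min}])$, so $E_A[A_{\min}\setminus\VCO\vert_{\min}] \supseteq \GBA[A_{\min}\setminus\VCO\vert_{\min}]$. Edges from other gadgets $E_A(i)$ with $i\neq \min$ can only add edges, and adding edges never destroys a clique. By Lemma \ref{lem: beta covering}, $\vert B_i\cap B_{\min}\vert\leq 1$, so these gadgets contribute no edges inside $B_{\min}$ at all. Either way, we obtain at least the $\GBA$-edges on $A_{\min}\setminus\VCO\vert_{\min}$.

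Next I would treat Bob's side. Under the hypothesis $A'_{\min}=A_{\min}$, Bob's input is $E_B = \overline{\GBA}[A_{\min}\setminus\VCO\vert_{\min}]$. Taking the union, exactly as in Lemma \ref{lem: clique when A'=A}, gives
\[
E\left(G[A_{\min}\setminus\VCO\vert_{\min}]\right) \supseteq E\left((\GBA\cup\overline{\GBA})[A_{\min}\setminus\VCO\vert_{\min}]\right) = E\left(K_n[A_{\min}\setminus\VCO\vert_{\min}]\right).
\]

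It then remains to count. Since $\VCO\vert_{\min}$ is the intersection of all members of $m^{-1}\vert_{\min}$, and $A_{\min}\in m^{-1}\vert_{\min}$ by Lemma \ref{lem: A_i in m^-1_i}, we have $\VCO\vert_{\min}\subseteq A_{\min}$. Hence
\[
\vert A_{\min}\setminus\VCO\vert_{\min}\vert = \vert A_{\min}\vert - \vert\VCO\vert_{\min}\vert = 9\cdot\beta - \vert\VCO\vert_{\min}\vert,
\]
which gives the bound. The only step needing care is confirming that $G$ is still simple, i.e.\ that $E_A$ and $E_B$ never both contain the same edge. This holds as in Lemma \ref{lem: single cq G simple}, since $E(E_A)\subseteq E(\GBA)$ and $E(E_B)\subseteq E(\overline{\GBA})$. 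There is no genuine obstacle; the argument is essentially bookkeeping.
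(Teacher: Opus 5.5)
Your proof is correct and takes the same route as the paper, which simply defers to the argument of Lemma \ref{lem: clique when A'=A}: Alice's $\GBA$-edges and Bob's $\overline{\GBA}$-edges on $A_{\min}\setminus\VCO\vert_{\min}$ together form a clique. You also add two small points the paper leaves implicit: that $\VCO\vert_{\min}\subseteq A_{\min}$ because $A_{\min}\in m^{-1}\vert_{\min}$, and that the other gadgets contribute no conflicting edges inside $B_{\min}$.
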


\begin{corollary}\label{cor: multi clique omega(G)}
    If $A'_{\min} = A_{\min}$ and $\vert \VCO\vert_{\min} \vert = o(\beta)$, then $\omega(G) \geq 8\cdot\beta$ (for sufficiently large $n$).
\end{corollary}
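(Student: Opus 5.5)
This is a direct consequence of Lemma \ref{lem: multi-instance clique when A'=A}; all that is needed is to turn the $o(\beta)$ hypothesis into a concrete constant. First I apply Lemma \ref{lem: multi-instance clique when A'=A}. It uses the hypothesis $A'_{\min} = A_{\min}$ and gives
\[
    \omega(G) \geq 9\cdot\beta - \vert \VCO\vert_{\min} \vert .
\]
For completeness, I would recall why that lemma holds, mirroring Lemma \ref{lem: clique when A'=A}. Alice's input contains $E(\GBA[A_{\min}])$, and Bob's input is $E(\overline{\GBA}[A_{\min}\setminus\VCO\vert_{\min}])$. Their union therefore covers every pair inside $A_{\min}\setminus\VCO\vert_{\min}$, and $\vert A_{\min}\vert = 9\cdot\beta$.

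Next I use the hypothesis $\vert \VCO\vert_{\min} \vert = o(\beta)$. This means that for sufficiently large $n$ (and $\beta$ in the relevant asymptotic regime) we have $\vert \VCO\vert_{\min}\vert \leq \beta$. Substituting,
\[
    \omega(G) \geq 9\cdot\beta - \beta = 8\cdot\beta .
\]

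The only delicate point is how ``sufficiently large $n$'' is interpreted. The $o(\beta)$ bound comes from Lemma \ref{lem: multi cq CJ A_min o(beta)} and Lemma \ref{lem: cJ(A_i) >= VCO_i}, and there it is asymptotic in the regime where $\beta$ grows, or more precisely where $n^2/\beta$ dominates the message size. I would state explicitly that "$o(\beta)$" means a quantity at most $\varepsilon\beta$ for any fixed $\varepsilon>0$ once $n$ is large, and then take $\varepsilon = 1$. If $\beta$ is bounded, then $o(\beta)$ forces $\vert\VCO\vert_{\min}\vert = 0$ for large $n$ because it is an integer, and the bound holds even more easily.
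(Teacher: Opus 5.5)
Your proposal is correct and matches the paper. The paper gives no separate proof of this corollary; it treats it as immediate from Lemma~\ref{lem: multi-instance clique when A'=A}, namely $\omega(G) \geq 9\cdot\beta - \vert \VCO\vert_{\min}\vert$, combined with $\vert \VCO\vert_{\min}\vert \leq \beta$ for sufficiently large $n$. Your remark about bounded $\beta$ is harmless but unnecessary, since taking $\varepsilon = 1$ already covers that case.
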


\subsubsection{Understanding Bob's Output}

We now, again, turn to consider the cliques which Bob is able to output.

The first step is exactly the same as before: arguing that Bob's input reveals no additional information about $A$ given $m$, and so we again do not repeat the proof.

\begin{lemma}\label{lem: multi-instance m+E_B = m}
    Let $(m+E_B)^{-1}$ denote the set of graphs which Alice may hold, given both $m$ and $E_B$.
    Then, $(m+E_B)^{-1} = m^{-1}$.
\end{lemma}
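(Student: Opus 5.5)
The plan mirrors the proof of Lemma \ref{lem: m+E_B = m}, replacing $A'$ by $A'_{\min}$ and $\VCO$ by $\VCO\vert_{\min}$. First, we note the trivial inclusion $(m+E_B)^{-1}\subseteq m^{-1}$. Any tuple $(A''_1,\dots,A''_\ell)$ that Alice may hold given both $m$ and $E_B$ must in particular be consistent with $m$, since conditioning on extra data can only shrink the set of consistent inputs.

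For the reverse inclusion we argue by contradiction. Suppose some $E_{A''}\in m^{-1}\setminus(m+E_B)^{-1}$ exists. Then $E_B$ must rule out $E_{A''}$, and this can only happen in one of two ways. Either $E_{A''}\cup E_B$ fails to be a simple graph, or $E_B$ is not a possible Bob input when Alice holds $E_{A''}$.

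We handle the first way with the multi-gadget analogue of Lemma \ref{lem: single cq G simple}. For every $i$, $E_A(i)\subseteq E(\GBA)$, so $E(E_{A''})\subseteq E(\GBA)$. On the other side, $E(E_B)\subseteq E(\overline{\GBA})$. These two edge sets are disjoint, so the union is always simple.

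The second way is the main point to check. Bob's input is determined entirely by $m$ (together with $\GBA$, which is fixed and common to all inputs), plus the random choice of $A'_{\min}$. Three facts justify this:
\begin{itemize}
\item The index $\min=\argmin_j \cJ(A_j\mid M=m)$ depends only on $m$ and the input distribution, with ties broken by a fixed rule.
\item The sets $m^{-1}\vert_{\min}$ and $\VCO\vert_{\min}$ are functions of $m^{-1}$ alone.
\item $A'_{\min}$ is drawn from $m^{-1}\vert_{\min}$ without reference to Alice's actual input.
\end{itemize}
Hence for every $E_{A''}\in m^{-1}$, running the input distribution with Alice holding $E_{A''}$ yields the same message $m$, the same $\min$, the same $\VCO\vert_{\min}$, and the same support for $A'_{\min}$. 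So the realised $E_B$ has positive probability under $E_{A''}$ as well, and $E_B$ is a valid Bob input for every element of $m^{-1}$. This rules out the second way, so no such $E_{A''}$ exists and $(m+E_B)^{-1}=m^{-1}$.

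The only subtlety I expect is in that last step. We must make sure $\min$ is genuinely message-determined: $\cJ(A_j\mid M=m)$ is a property of the distribution conditioned on $M=m$, not of the realised $A$. We must also fix a deterministic tie-breaking rule in the $\argmin$, so that Bob's input construction is a function of $m$ and independent randomness only.
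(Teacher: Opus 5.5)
Your proposal is correct and follows the paper's proof essentially verbatim. The paper handles this lemma with the same two-case contradiction as in Lemma \ref{lem: m+E_B = m}: non-simplicity is ruled out by $E(E_A)\subseteq E(\GBA)$ and $E(E_B)\subseteq E(\overline{\GBA})$, and the second case is ruled out because $E_B$ is built solely from the common message. Your extra remarks, that $\min$ is a function of $m$ alone under a fixed tie-breaking rule and that $A'_{\min}$ is drawn without reference to Alice's actual input, just make explicit what the paper leaves implicit.
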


We now, again, use this to argue that Bob is (almost always) only able to produce small outputs, this time noting that the proof again has some similarities to that of Lemma \ref{lem: single clique bob output}, but is not exactly the same.

\begin{lemma}\label{lem: multi-instance clique bob output}
    Bob's output must be of size $\vert OUT \vert \leq 6\cdot\log n$ with probability at least $1-2/n$ over $\GBA$.
\end{lemma}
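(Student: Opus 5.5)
Following the proof of Lemma \ref{lem: single clique bob output}, let $\GCO$ be the graph of edges Bob is certain exist given $m$ and $E_B$. The argument of Claim \ref{claim: single cq OUT < GCO} uses only determinism and validity of the output, so it applies unchanged: $E(OUT)\subseteq E(\GCO)$, and it suffices to show $\omega(\GCO)\leq 6\cdot\log n$ with probability at least $1-2/n$ over $\GBA$.

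Next I will describe $\GCO$. Edges of $E_B$ are certain. Any other certain edge must lie in $E_A$, and by Lemma \ref{lem: multi-instance m+E_B = m} it suffices to find the edges certain given $m$ alone. Every edge of $E_A$ lies in some $A_i\times A_i\subseteq B_i\times B_i$. Since $\vert B_i\cap B_j\vert\leq 1$, each pair $\{u,v\}$ lies in at most one $B_i$. So the edge $uv$ is certain only if $u,v\in A'_i$ for every $A'\in m^{-1}$, i.e.\ $u,v\in\VCO\vert_i$, for the unique such $i$. This gives
\[
    E(\GCO)\subseteq E(E_B)\cup \bigcup_{i}E\left(\GBA[\VCO\vert_i]\right)\subseteq E(\overline{\GBA})\cup E(\GBA).
\]
This is the key difference from the single-gadget case: the pieces $\GBA[\VCO\vert_i]$ over different $i$ are no longer vertex-disjoint from each other, nor from $A'_{\min}\setminus\VCO\vert_{\min}$. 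So we cannot simply take a maximum as in Claim \ref{claim: single cq omega(GCO)}. What does still hold is that $E_B$ uses only edges of $\overline{\GBA}$, and every other certain edge is an edge of $\GBA$.

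The main step is to bound a clique $K$ in $\GCO$. Split $K$ into $K_1=K\cap(A'_{\min}\setminus\VCO\vert_{\min})$ and $K_2=K\setminus K_1$. Vertices of $K_2$ are not touched by $E_B$, so every edge inside $K_2$ is a certain $E_A$-edge and hence an edge of $\GBA$. Thus $K_2$ is a clique of $\GBA$, so $\vert K_2\vert\leq\omega(\GBA)$.

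For $K_1$, edges inside $K_1$ could come from $E_B$ or from certain edges in other gadgets $B_j$. Here I will use that $K_1\subseteq B_{\min}$. Any two vertices of $K_1$ lie together in $B_{\min}$, so the pair lies in no other $B_j$, and its only possible certain edge comes from gadget $\min$. But $K_1$ avoids $\VCO\vert_{\min}$, so no such edge is certain from $E_A$, and all edges inside $K_1$ belong to $E_B\subseteq\overline{\GBA}$. Hence $\vert K_1\vert\leq\omega(\overline{\GBA})$.

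Therefore $\vert K\vert\leq\omega(\GBA)+\omega(\overline{\GBA})\leq 6\cdot\log n$. The last inequality holds with probability at least $1-2/n$ by applying Proposition \ref{prop: random graph clique} to both $\GBA$ and $\overline{\GBA}\sim\cG_{n,1/2}$ and taking a union bound. The factor $6$ in place of $3$ reflects exactly this sum replacing the maximum.

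The main obstacle is the careful bookkeeping of which pairs are certain, and in particular the use of the intersection property $\vert B_i\cap B_j\vert\leq 1$ to confine mixed certain edges to a single gadget.
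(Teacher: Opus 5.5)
Your proposal is correct and follows essentially the same route as the paper. Both arguments use the same two preliminary claims and split $V$ into $A'_{\min}\setminus\VCO\vert_{\min}$ and its complement. The condition $\vert B_i\cap B_j\vert\leq 1$ then forces every certain edge inside the first part to come from $E_B\subseteq\overline{\GBA}$ and every certain edge inside the second part to lie in $\GBA$. This gives $\omega(\GCO)\leq\omega(\GBA)+\omega(\overline{\GBA})\leq 6\cdot\log n$ via a union bound.
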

\begin{proof}
    Again, let $\GCO$ denote the graph consisting precisely of the edges which Bob is certain exist in $G$, given both $m$ and $E_B$.

    We remark that our first two steps are identical to those in the proof of Lemma \ref{lem: single clique bob output}.

    \begin{claim}\label{claim: multi cq OUT in GCO}
        It must be that $E(OUT) \subseteq E\left(\GCO\right)$.
    \end{claim}
    \begin{proof}
        For a contradiction, suppose that $E(OUT)\not\subseteq E\left(\GCO\right)$, so there exists some edge $e\in E(OUT)\setminus E(\GCO)$.
        Then, since the protocol is deterministic, this necessitates the existence of an input graph $G$ such that Bob receives the same message $m$, input graph $E_B$, and produces the same output $OUT$, but where $e\notin E(G)$.
        However, Bob would then produce an invalid output (i.e. one that is not a clique) on this input, yielding the desired contradiction.
    \end{proof}

    \begin{claim}\label{claim: multi cq bob gco}
        $E(\GCO) \subseteq E\left(E_B\right) \cup \left(\bigcup_{i\in[\ell]} E\left(\GBA\left[\VCO\vert_i\right]\right)\right)$.
    \end{claim}
    \begin{proof}
        Again, it is clear that Bob is certain every edge in $E\left(E_B\right)$ exists, and so we proceed by considering the remaining edges which Bob is certain exist.
        Naturally, each of these must be contained in $E_A$ (since they are not in $E_B$), and so we turn to consider the edges from $E_A$ which Bob is certain exist.
        By Lemma \ref{lem: multi-instance m+E_B = m}, we can approach this solely by considering the edges which Bob is sure exist given only the message $m$.

        By construction, every edge in $E_A$ is of the form $e\in A_j\times A_j$ for some $j\in[\ell]$, and so Bob can only be certain an edge exists if both of its endpoints are guaranteed to be elements of some $A_j$.
        This is precisely the definition of $\VCO\vert_j$, giving us that every such edge must be of the form $\VCO\vert_j \times\VCO\vert_j$ for some $j\in[\ell]$.
        Then, since $E(E_A)\subseteq E(\GBA)$, and by the definition of induced subgraphs, this is equivalent to every such edge being an element of $E(\GBA\left[\VCO\vert_j\right])$ for some $j\in[\ell]$.
    \end{proof}

    The proof now begins to differ from that of Lemma \ref{lem: single clique bob output} as we consider $\GCO$ in more detail.

    \begin{claim}\label{claim: multi cq gco induced A}
        $E\left(\GCO\left[A'_{\min}\setminus\VCO\vert_{\min}\right]\right) \subseteq E\left(\overline{\GBA}\left[A'_{\min}\setminus\VCO\vert_{\min}\right]\right)$.
    \end{claim}
    \begin{proof}
        Let $e = (u,v)\in E\left(\GCO\left[A'_{\min}\setminus\VCO\vert_{\min}\right]\right)$.
        Immediately, $u,v\notin\VCO\vert_{\min}$.
        Also, by the construction of each $B_i$, it must be that $u$ and $v$ are not both also contained in any other $B_i$.
        Thus, $e\notin E(\GBA\left[\VCO\vert_i\right])$ for any $i$, and so, by Claim \ref{claim: multi cq bob gco}, $e\in E(E_B) =  E\left(\overline{\GBA}\left[A'_{\min}\setminus\VCO\vert_{\min}\right]\right)$.
    \end{proof}

    \begin{claim}\label{claim: multi cq gco induced V - A}
        $E\left(\GCO\left[V\setminus(A'_{\min}\setminus\VCO\vert_{\min})\right]\right) \subseteq E\left(\GBA\left[V\setminus\left(A'_{\min}\setminus\VCO\vert_{\min}\right)\right]\right)$
    \end{claim}
    \begin{proof}
        From the proof of Claim \ref{claim: multi cq gco induced A}, no edge in $E(\GCO\left[V\setminus\left(A'_{\min}\setminus\VCO\vert_{\min}\right)\right])$ has an endpoint in $A'_{\min}\setminus\VCO\vert_{\min}$ whilst every edge in $E_B$ does, and so
        \[
            E\left(\GCO\left[V\setminus\left(A'_{\min}\setminus\VCO\vert_{\min}\right)\right]\right) \cap E\left(E_B\right) = \emptyset.
        \]
        Thus, $E\left(\GCO\left[V\setminus\left(A'_{\min}\setminus\VCO\vert_{\min}\right)\right]\right)\subseteq E(E_A)$, and so
        \begin{align*}
            E\left(\GCO\left[V\setminus\left(A'_{\min}\setminus\VCO\vert_{\min}\right)\right]\right) &\subseteq E\left(E_A\left[V\setminus\left(A'_{\min}\setminus\VCO\vert_{\min}\right)\right]\right)\\
                &\subseteq E\left(\GBA\left[V\setminus\left(A'_{\min}\setminus\VCO\vert_{\min}\right)\right]\right),
        \end{align*}
        with the final relation due to the construction of $E_A$.
    \end{proof}

    Now, to complete the proof, we observe that $A'_{\min}\setminus\VCO\vert_{\min}$ and $V\setminus\left(A'_{\min}\setminus\VCO\vert_{\min}\right)$ form a partition of $V$, and so
    \begin{align*}
        \omega(\GCO) &\leq \omega\left(\GCO[A'_{\min}\setminus\VCO\vert_{\min}]\right) + \omega\left(\GCO[V\setminus\left(A'_{\min}\setminus\VCO\vert_{\min}\right)]\right)\\
                     &\leq \omega\left(\overline{\GBA}\left[A'_{\min}\setminus\VCO\vert_{\min}\right]\right) + \omega\left(\GBA\left[V\setminus\left(A'_{\min}\setminus\VCO\vert_{\min}\right)\right]\right)\tag{Claims \ref{claim: multi cq gco induced A} and \ref{claim: multi cq gco induced V - A}.}\\
                     &\leq \omega\left(\overline{\GBA}\right) + \omega\left(\GBA\right)
    \end{align*}
    Then, since $\GBA$ and $\overline{\GBA}$ are both Erd\H{o}s-R\'{e}nyi random graphs, a union bound over Proposition \ref{prop: random graph clique} yields
    \[
        \Pr_{\GBA}\left(\omega(\GBA) \leq 3\cdot\log n \quad\text{and}\quad \omega\left(\overline{\GBA}\right) \leq 3\cdot\log n\right) \geq 1 - 2/n,
    \]
    and when this occurs we get
    \[
        \omega\left(\GCO\right) \leq 6\cdot\log n.
    \]
    The proof then follows via Claim \ref{claim: multi cq OUT in GCO}.
\end{proof}

\subsubsection{Bringing it all Together}

We are now ready to complete the proof of Theorem \ref{thm: CQ}, noting that the proof is again similar to that of Lemma \ref{lem: single CQ}.

\CLIQUE*
\begin{proof}
    We prove a bound for one-way two-party communication protocols, and again prove the contrapositive by considering a protocol $\pi$ which only sends messages of size $o\left(n^2/\beta\right)$ bits (i.e. has communication cost $CC(\pi) = o\left(n^2/\beta\right)$).
    However, we now suppose that Alice and Bob's inputs are constructed via Input Distribution \ref{input: multi clique}.

    Since $CC(\pi) = o(n^2/\beta)$, by Lemma \ref{lem: multi gadget average message cJ} we get that
    \[
        \Pr_{A,\GBA}(\cJ(A \vert M = m) = o(n^2/\beta)) \geq 1-o(1),
    \]
    and we again proceed by conditioning on this event occurring.
    Then, by Lemma \ref{lem: multi cq CJ A_min o(beta)}, when this event occurs it must be that
    \[
        \cJ(A_{\text{min}} \mid M = m) = o(\beta),
    \]
    and so, by Lemma \ref{lem: cJ(A_i) >= VCO_i},
    \[
        \vert \VCO\vert_{\min} \vert = o(\beta).
    \]

    We now consider the input case where $A'_{\min} = A_{\min}$, which is again valid for us to consider since we have that $A_{\min}\in m^{-1}\vert_{\min}$ by Lemma \ref{lem: A_i in m^-1_i}, and $A'_{\min}$ is still uniformly randomly chosen from $m^{-1}\vert_{\min}$.
    Therefore, again, $\Pr\left(A'_{\min} = A_{\min}\right) = 1/\vert m^{-1}\vert_{\min} \vert > 0$, and so this input is possible via the probabilistic method.
    Then, by Corollary \ref{cor: multi clique omega(G)}, when this event occurs, we again get that (for sufficiently large $n$)
    \[
        \omega(G) \geq 8\cdot\beta.
    \]

    However, by Lemma \ref{lem: multi-instance clique bob output}, it must also be the case that
    \[
        \vert OUT \vert \leq 6\cdot\log n
    \]
    with probability at least $1-2/n = 1- o(1)$ over $\GBA$ (for sufficiently large $n$).

    Thus, via a union bound over the randomness of $A$ and $\GBA$, it must be that simultaneously $\cJ(A \mid M = m) = o(n^2/\beta)$, $\omega(G) \geq 8\cdot\beta$, and $\vert OUT \vert \leq 6\cdot\log n$ with probability at least
    \[
        1 - o(1) - o(1) > 0.\tag{For sufficiently large $n$.}
    \]
    Therefore, by the probabilistic method, there must exist some $A$ and $\GBA$ such that all three of these conditions are met.
    Then, precisely when this is the base graph sampled by the input distribution, it must be that the protocol can only output (at best) a
    \[
        8\cdot\beta\cdot\frac{1}{6\cdot\log n} = \left(\frac{8\cdot\beta}{6\cdot\log n}\right) > \left(\frac{\beta}{\log n}\right)\text{-approximation},
    \]
    giving us that any protocol which achieves a $\left(\frac{\beta}{\log n}\right)$-approximation must send at least one message of size $\Omega\left(n^2/\beta\right)$.
    Then, finally, rescaling $\beta' = \beta\cdot\log n$ completes the proof.
\end{proof}

\section{Theorem \ref{thm: IS}: The Extension to \MIS}\label{sec: MIS}

We now show how this approach can be extended to \tMIS.

\MISr*

The proof of this is almost identical to that of Theorem \ref{thm: CQ}, except that the constructed graphs given to Alice and Bob are slightly different.
In particular, the compression arguments remain unchanged.
Therefore, to avoid repetition, we will only present the new input distribution together with the analyses of the largest resulting independent sets.

\subsection{The Hard Communication Game}

Again, we begin by sampling a base graph, $\GBA$, and then proceed via the well known fact that cliques in a graph $G$ correspond precisely to independent sets in the complement graph $\overline{G}$.
However, it will not be sufficient to simply give each player the complement of whatever graph they would have been given under Input Distribution \ref{input: multi clique}.
This is because we would no longer be able to guarantee that Alice and Bob's edge sets were disjoint, which is required to ensure that $G$ is simple.

Instead, our approach is to `remove' the edges that would have previously been given to Alice and Bob respectively.
This way, we are able to continue ensuring that the constructed input graph is a valid simple graph, but where the previous cliques now correspond to independent sets instead.

In accordance with this, the only differences in the hard input distribution, when compared with Input Distribution \ref{input: multi clique}, occur when explicitly defining $E_A$ and $E_B$ (i.e. in step three of Alice and Bob's respective constructions).

\begin{tcolorbox}[standard jigsaw,opacityback=0,width=0.99\textwidth,breakable]\label{input: multi is}
    {\large{\underline{\textbf{Input Distribution \ref{input: multi is}: Independent Set Communication Game:}}}}
    \vspace{1em}

    \textbf{Initialisation:} Sample $\GBA\sim\cG_{n,1/2}$.

    \vspace{1em}

    \textbf{Alice's Input Distribution:} Let $V = [n]$.

    \begin{enumerate}
        \item Let $\cB = \{B_1,\dots,B_{\ell}\}$ denote the subsets of size $18\cdot\beta$ from Lemma \ref{lem: beta covering}.
        \item For each $i\in[\ell]$:
            \begin{enumerate}
                \item Randomly sample $A_i\subseteq B_i$ such that $\vert A_i \vert = 9\cdot\beta$.
            \end{enumerate}
        \item Return $E_A = (V, E(\GBA)\setminus\bigcup_{i\in[\ell]} E(\GBA[A_i]))$.
    \end{enumerate}

    \vspace{1em}

    \textbf{The Compression (Message) Step:}
    Again, let $m$ denote the (fixed) message sent by Alice to Bob, and let $m^{-1}$ denote the pre-image of $m$.
    For each $i\in[\ell]$, let $m^{-1}\vert_i$ denote the pre-image of $m$ restricted to $B_i$, i.e.
    \[
        m^{-1}\vert_i = \{ A'_i : \left(A'_1,\dots,A'_{\ell}\right)\in m^{-1} \}.
    \]
    Finally, for each $i\in[\ell]$, define
    \[
        \VCO\vert_i = \big\{ v\in V : \text{ $v\in A'$ for all $A'\in m^{-1}\vert_i$}\big\}.
    \]

    \vspace{1em}

    \textbf{Bob's Input Distribution:}

    \begin{enumerate}
        \item Let $\min = \argmin_j \left\{ \cJ\left(A_j \mid M = m\right) \right\}$.
        \item Randomly select $A_{\min}'\in m^{-1}\vert_{\min}$.
        \item Return $E_B = (V, E(\overline{\GBA})\setminus E(\overline{\GBA}[A'_{\text{min}}\setminus\VCO\vert_{\text{min}}]))$.
   \end{enumerate}
\end{tcolorbox}

\subsection{Understanding \texorpdfstring{$\alpha(G)$}{alpha(G)}}

\begin{lemma}\label{lem: is alpha when A'=A}
    If $A'_{\min} = A_{\min}$, then $\alpha(G) \geq 9\cdot\beta - \vert\VCO\vert_{\min}\vert$.
\end{lemma}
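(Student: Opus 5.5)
The plan is to mirror the proof of Lemma \ref{lem: clique when A'=A}, but to work with missing edges instead of present ones. Set $S = A_{\min}\setminus\VCO\vert_{\min}$. It suffices to show that $G[S]$ has no edges, since $\vert S\vert = \vert A_{\min}\vert - \vert\VCO\vert_{\min}\vert = 9\cdot\beta - \vert\VCO\vert_{\min}\vert$. (We use $\VCO\vert_{\min}\subseteq A_{\min}$, which holds because $A_{\min}\in m^{-1}\vert_{\min}$ by Lemma \ref{lem: A_i in m^-1_i}.)

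First, Alice's edges inside $S$. Any pair $e\subseteq S$ has both endpoints in $A_{\min}$, so $e$ is a pair of $A_{\min}\times A_{\min}$. Either $e\notin E(\GBA)$, and then $e\notin E(E_A)$ because $E(E_A)\subseteq E(\GBA)$; or $e\in E(\GBA)$, and then $e\in E(\GBA[A_{\min}])$, which was removed from Alice's edge set in step three of her construction. Either way $E(E_A[S])=\emptyset$.

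Second, Bob's edges inside $S$. Under the hypothesis $A'_{\min}=A_{\min}$, we have $S = A'_{\min}\setminus\VCO\vert_{\min}$. Any pair $e\subseteq S$ in $E(\overline{\GBA})$ therefore lies in $E(\overline{\GBA}[A'_{\min}\setminus\VCO\vert_{\min}])$, which is exactly what Bob's construction removes. Any pair not in $E(\overline{\GBA})$ is absent from $E_B$ because $E(E_B)\subseteq E(\overline{\GBA})$. Hence $E(E_B[S])=\emptyset$.

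Combining the two, $E(G[S]) = E(E_A[S])\cup E(E_B[S]) = \emptyset$, so $S$ is an independent set in $G$ and $\alpha(G)\geq 9\cdot\beta - \vert\VCO\vert_{\min}\vert$.

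There is no substantive obstacle. The only care needed is to use the containments $E(E_A)\subseteq E(\GBA)$ and $E(E_B)\subseteq E(\overline{\GBA})$ correctly, and to invoke the hypothesis $A'_{\min}=A_{\min}$ so that the set Bob vacates coincides with the set Alice vacates.
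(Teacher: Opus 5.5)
Your proof is correct and is the same argument as the paper's. The paper simply asserts that $A_{\min}\setminus\VCO\vert_{\min}$ is independent in both $E_A$ and $E_B$, hence in $G$. You spell out the two containments $E(E_A)\subseteq E(\GBA)$ and $E(E_B)\subseteq E(\overline{\GBA})$, and you use $A'_{\min}=A_{\min}$ so that the set Bob's construction removes coincides with Alice's. You also make explicit the containment $\VCO\vert_{\min}\subseteq A_{\min}$ (from $A_{\min}\in m^{-1}\vert_{\min}$) that justifies the size count, which the paper leaves implicit.
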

\begin{proof}
    It is immediate to see that $A_{\min}\setminus\VCO\vert_{\min}$ forms an independent set of this size in both $E_A$ and $E_B$, and so it must also form such an independent set in $G$.
\end{proof}

\begin{lemma}\label{lem: is out size}
    Bob's output must be of size $\vert OUT \vert \leq 3\cdot\log n$ with probability at least $1-1/n$ over $\GBA$.
\end{lemma}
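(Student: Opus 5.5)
The plan is to mirror the proof of Lemma~\ref{lem: multi-instance clique bob output}, with "edges Bob is certain exist" replaced by "pairs Bob is certain are \emph{not} edges". Write $S = A'_{\min}\setminus\VCO\vert_{\min}$. Let $\GCO$ be the graph on $V$ whose edges are exactly the pairs that Bob, given $m$ and $E_B$, is certain are absent from $G$.

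\textbf{Step 1: Bob's output is a clique in $\GCO$.} This is the same determinism argument as Claim~\ref{claim: multi cq OUT in GCO}. Suppose some pair $\{u,v\}\subseteq OUT$ is not certainly absent. Then some input consistent with $(m,E_B)$ contains the edge $uv$. On that input Bob produces the same $OUT$, which is then not an independent set. Hence every pair inside $OUT$ is an edge of $\GCO$.

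\textbf{Step 2: describe $\GCO$.} A pair is absent from $G$ only if it is absent from both $E_B$ and $E_A$.
\begin{itemize}
\item $E_B$ contains every edge of $\overline{\GBA}$ except those inside $S$. So a pair not in $E_B$ is either an edge of $\GBA$, or an edge of $\overline{\GBA}[S]$.
\item The second kind is never in $E_A$, since $E(E_A)\subseteq E(\GBA)$. So these pairs are certainly absent.
\item For an edge of $\GBA$, first use the analogue of Lemma~\ref{lem: multi-instance m+E_B = m}: $E_B$ reveals nothing beyond $m$. Bob is then certain the edge is missing from $E_A$ only if it is certainly deleted, i.e. it lies inside some $A'_i$ for every $A'\in m^{-1}$. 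That forces both endpoints into $\VCO\vert_i$ for a common $i$.
\end{itemize}
This gives the inclusion
\[
E(\GCO)\subseteq E\left(\overline{\GBA}[S]\right)\cup\bigcup_{i\in[\ell]}E\left(\GBA[\VCO\vert_i]\right).
\]

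\textbf{Step 3: bound $\omega(\GCO)$.} Split along the partition $S$, $V\setminus S$, exactly as in Claims~\ref{claim: multi cq gco induced A} and~\ref{claim: multi cq gco induced V - A}.
\begin{itemize}
\item Inside $S$: no vertex of $S$ lies in $\VCO\vert_{\min}$, and two vertices of $S\subseteq B_{\min}$ share no other block because $\vert B_i\cap B_j\vert\le 1$. So $\GCO[S]\subseteq\overline{\GBA}[S]$.
\item Outside $S$: $\GCO[V\setminus S]\subseteq\GBA$.
\end{itemize}
A clique $K$ in $\GCO$ therefore splits into a clique of $\overline{\GBA}$ and a clique of $\GBA$. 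Proposition~\ref{prop: random graph clique} and a union bound then give $\vert OUT\vert\le 6\log n$ with probability at least $1-2/n$. This weaker bound is already enough for the final theorem, which only needs $8\beta/(6\log n)>\beta/\log n$.

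\textbf{Main obstacle: the sharper constants.} The stated $3\log n$ and $1-1/n$ require a single random graph to dominate $\GCO$. My first attempt would use the single graph $F$ that equals $\overline{\GBA}$ on pairs inside $S$ and $\GBA$ on all other pairs. Each pair of $F$ is a fair coin independent of the others, so $F$ looks like a $\cG_{n,1/2}$ sample, and Proposition~\ref{prop: random graph clique} applied once would give the claim.

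The difficulty is that $S$ depends on $\GBA$ through the message and through $\min$. The paper's earlier arguments treat $\GBA$ as fixed when applying Proposition~\ref{prop: random graph clique}, so this dependence has to be handled. I would try to do so by union bounding over the possible sets $S$; I am not yet sure this works.

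If that fails, I would fall back on the $6\log n$ bound from Step 3. I would also try to control cross edges between $K\cap S$ and $K\setminus S$ directly, using that each such edge lies in a single block $B_i$ meeting $B_{\min}$ in at most one vertex.
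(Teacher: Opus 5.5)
Your Steps 1--3 are correct, but they prove only $\vert OUT\vert\le 6\log n$ with probability $1-2/n$. The stated constants are exactly the part you leave open, and the repair you suggest does not work. A union bound over the possible sets $S$ ranges over roughly $\ell\cdot 2^{18\beta}$ candidates. For a fixed $S$, the failure probability of $F_S$ is only $1/n$ from Proposition~\ref{prop: random graph clique}, or $2^{-\Theta(\log^2 n)}$ from a direct first-moment bound. So that union bound closes only when $\beta=O(\log^2 n)$.

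Nor can you recover the sharp bound along the paper's route. The paper asserts that every pair Bob is certain is absent lies in some $\VCO\vert_i\times\VCO\vert_i$, concludes $OUT\subseteq\bigcup_i\VCO\vert_i$, and then bounds $\vert OUT\vert\le\alpha(\overline{\GBA})=\omega(\GBA)$ by a single use of Proposition~\ref{prop: random graph clique}. Your Step 2 correctly identifies the pairs of $\overline{\GBA}[S]$: they are removed from $E_B$ and can never lie in $E_A\subseteq\GBA$. They are therefore certain non-edges in exactly the sense of Claim~\ref{claim: multi cq OUT in GCO}, and the paper's assertion omits them. So nothing forces $OUT\subseteq\bigcup_i\VCO\vert_i$; for example, an independent set of $\GBA[S]$ is a legitimate output.

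The missing idea is to union bound over the clique together with its trace on $S$, rather than over $S$. Your Steps 2--3 give $\GCO\subseteq F_S$ for the realised $S$, so it suffices that, with high probability, \emph{no} $S\subseteq V$ has $\omega(F_S)\ge k$. Suppose $K$ is a $k$-clique of $F_S$ and $K_1=K\cap S$. Then every pair inside $K_1$ is a non-edge of $\GBA$, and every other pair of $K$ is an edge of $\GBA$. This is a fully prescribed pattern on $\binom{k}{2}$ independent fair coins. There are at most $\binom{n}{k}2^k$ choices of $(K,K_1)$, whatever $S$ is, hence
\[
\Pr_{\GBA}\bigl(\exists S\subseteq V:\ \omega(F_S)\ge k\bigr)\le\binom{n}{k}2^{k}2^{-\binom{k}{2}}\le 2^{k\left(\log n+1-\frac{k-1}{2}\right)}.
\]
For $k=\lfloor 3\log n\rfloor+1$ this is at most $2^{-k}\le 1/n$ once $\log n\ge 5$. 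Because the bound is uniform in $S$, the dependence of $S$ on $\GBA$ (through $m$ and $\min$) is harmless. You then obtain $\vert OUT\vert\le 3\log n$ with probability at least $1-1/n$ for sufficiently large $n$, which is the lemma as stated.
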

\begin{proof}
    Similarly to the proof of Lemma \ref{lem: multi-instance clique bob output}, Bob must be certain of the status of every non-edge they include in $OUT$.
    However, unlike edges, in order for Bob to be certain of the status of a non-edge, they must be certain of its status in \textit{both} $E_A$ and $E_B$.

    Now, following from the definition of $\VCO\vert_i$, it must be that every non-edge whose status is certain to Bob is of the form $\VCO\vert_i \times\VCO\vert_i$ for some $i\in[\ell]$: for any other possible non-edge, there must exist some input $E_A'\in m^{-1}$ where this edge exists in either $E_A$ or $E_B$, since $E_B$ contains the complement of all edges held by Alice on these vertices.
    Therefore, it must be that $OUT \subseteq \bigcup_{i\in[\ell]}\VCO\vert_i$, and we can use this to write
    \begin{align*}
        \vert OUT \vert &\leq \alpha\left(G\left[\bigcup_{i\in[\ell]}\VCO\vert_i\right]\right)\\
                        &\leq \alpha\left(E_B\left[\bigcup_{i\in[\ell]}\VCO\vert_i\right]\right)\tag{$OUT$ must form an independent set in $E_B$.}\\
                        &\leq \alpha\left(\overline{\GBA}\left[\bigcup_{i\in[\ell]}\VCO\vert_i\right]\right)\tag{Construction of $E_B$.}\\
                        &\leq \alpha\left(\overline{\GBA}\right)\tag{Monotonicity of independent set on induced subgraphs.}\\
                        &= \omega\left(\GBA\right)\tag{Independent sets are equivalent to cliques in the complement.}\\
                        &\leq 3\cdot\log n,
    \end{align*}
    with the final inequality holding with probability at least $1-1/n$ over $\GBA$ by Proposition \ref{prop: random graph clique}.
\end{proof}

\subsection{Bringing it all Together}

Finally, combining Lemmas \ref{lem: is alpha when A'=A} and \ref{lem: is out size} with the compression arguments of Section \ref{sec: cq multi compression} and Lemma \ref{lem: m+E_B = m} yields the proof of Theorem \ref{thm: IS}.

\section*{Acknowledgments}

The author would like to thank Christian Konrad for introducing them to the problem, and Kheeran K. Naidu for interesting discussions.
The author would also like to thank previous anonymous reviewers for their helpful comments towards improving this manuscript.

\section*{AI Declaration}

AI tools were used to assist in the spelling and grammar of this work.
The only mathematical step aided by AI was to help fix a minor error in an earlier version of Claims \ref{claim: multi cq gco induced A} and \ref{claim: multi cq gco induced V - A}.
All other mathematical work was done without the use of AI tools.

\bibliographystyle{alpha}
\bibliography{bibliography}

\appendix

\section{The Algorithm}\label{app: algs}

We now outline the $O\left(n^2/\beta\right)$ space deterministic algorithm which can be obtained by de-randomising that of \cite{hssw12}.
We do not claim any novelty here, and only present it for completeness.

\begin{theorem}
    For any $\beta \leq n/\log n$, there exists a deterministic one-pass insertion-deletion streaming algorithm which computes either a $\beta$-approximate \MCQ or \tMIS using $O\left(n^2/\beta\right)$ bits of space.
\end{theorem}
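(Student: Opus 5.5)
The plan is to derandomise the sampling step of \cite{hssw12}: fix a deterministic family of vertex subsets and store, in a linear sketch, exactly the edges induced on them. Concretely, partition $V$ arbitrarily into $k = \lceil \beta/2 \rceil$ parts $P_1,\dots,P_k$, each of size roughly $2n/\beta$. For every pair $\{i,j\}$ (including $i=j$), the algorithm maintains the full adjacency matrix of $G[P_i\cup P_j]$ as a bit vector. Each entry is toggled on insertions and deletions, so the structure handles insertion-deletion streams exactly. This needs no randomness, and the final graph restricted to each $P_i\cup P_j$ is known exactly at the end of the stream.

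The space bound follows by counting. Every vertex pair $\{u,v\}$ lies inside $P_i\cup P_j$ for the unique pair determined by the parts of $u$ and $v$. Hence, if we store each vertex pair once, inside the block indexed by the parts of its endpoints, the total storage is exactly $\binom{n}{2}$ bits. That is too much, so this alone is not enough.

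We therefore refine the scheme and store only the edges inside the pairs $P_i\cup P_j$ that we actually need. The key observation is that an optimal clique (or independent set) $K$ of size $\omega$ distributes over $k$ parts. By pigeonhole, some single part $P_i$ contains at least $\omega/k \ge \omega\cdot 2/\beta$ vertices of $K$. So it suffices to store only the induced subgraphs $G[P_i]$ for each $i$, at a cost of $k\cdot\binom{2n/\beta}{2} = O(n^2/\beta)$ bits. At the end, the algorithm computes by brute force a maximum clique (respectively independent set) inside each $G[P_i]$ and outputs the largest one found. This output is a valid clique or independent set of $G$, since induced subgraphs preserve both properties. Its size is at least $\max_i |K\cap P_i| \ge \omega(G)/k \ge \omega(G)/\beta$ (and likewise with $\alpha(G)$), which gives a $\beta$-approximation.

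There are two remaining details. First, the parts should be taken of size $\lceil n/\beta\rceil$ with $k=\lceil n/\lceil n/\beta\rceil\rceil \le \beta$ parts, so that pigeonhole yields the factor $\beta$ exactly. The space is then $k\cdot\lceil n/\beta\rceil^2 = O(n^2/\beta)$ bits, using $\beta \le n/\log n$ to ensure $n/\beta \ge 1$ and that the rounding is harmless. Second, a single bit per pair suffices under deletions, because the stream is a valid sequence of updates to a simple graph, so each pair's final status is just its parity of updates. No step here is a real obstacle. The only place needing care is choosing the part size so that both the pigeonhole ratio and the space bound come out as $\beta$ and $n^2/\beta$ simultaneously.
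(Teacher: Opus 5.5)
Your final scheme is exactly the paper's argument: partition $V$ into at most $\beta$ blocks of size $\lceil n/\beta\rceil$, maintain each block's adjacency matrix under insertions and deletions, solve each block by brute force and return the best, with the same pigeonhole step and the same $O(n^2/\beta)$ count; the $P_i\cup P_j$ detour in your first two paragraphs is discarded and plays no role. It is correct, though like the paper you implicitly treat $\beta$ as an integer when asking for ``at most $\beta$ parts''. This is harmless, since you can use $\lfloor\beta\rfloor\ge\beta/2$ parts instead.
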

\begin{proof}
    The algorithm can be stated simply.
    First, arbitrarily partition $V$ into at most $\beta$ many disjoint subsets, $V_1,\dots,V_{\beta}$, each of size at most $\lceil n/\beta \rceil$.
    This can always be done.
    Then, during the stream, simply store all edges (and deletions) which are induced in any one of these subsets by maintaining the adjacency matrix of each subset.
    Finally, at the end of the stream, compute an optimal solution within each $V_i$, and return the largest of these.

    By a simple averaging argument, since there are exactly $\beta$ many subsets there must exist some $V_i$ which contains at least a $1/\beta$ fraction of the vertices from some optimal solution $OPT$.
    Thus, this $V_i$ must contain a sufficiently large induced solution, which can be returned at the end if nothing larger is found.

    To bound the space usage, each $V_i$ contains $O(n/\beta)$ many vertices, and so there are at most $O\left(n^2/\beta^2\right)$ many induced pairs of vertices.
    Thus, each adjacency matrix can be stored using $O\left(n^2/\beta^2\right)$ bits of space, and its labels with $O(n\log(n)/\beta)$ bits of space.
    Then, since there are at most $\beta+1$ many subsets,
    \[
        (\beta+1) \cdot O\left(n^2/\beta^2 + n\cdot\log(n)/\beta\right) = O\left(n^2/\beta + n\cdot\log(n)\right)
    \]
    bits of space suffice to maintain each of these adjacency matrices during the stream.
    This is precisely $O(n^2/\beta)$ when $\beta < n/\log n$, and $O(n\cdot\log(n))$ otherwise.
\end{proof}

\end{document}